\newtheorem{lemma}{Lemma}
\newtheorem{proposition}{Proposition}
\newtheorem{proofp}{Proof of Proposition}
\newtheorem{lemmap}{Proof of Lemma}
\newtheorem{definition}{Definition}
\newtheorem{example}{Example}
\newcolumntype{C}[1]{>{\centering\let\newline\\arraybackslash\hspace{0pt}}m{#1}}
\newcolumntype{R}[1]{>{\raggedleft\let\newline\\arraybackslash\hspace{0pt}}m{#1}}
\begin{document}

\begin{titlepage}
\title{Third Degree Price Discrimination Under Costly Information Acquisition\thanks{I would like to express my gratitude to Yuhta Ishii, Ron Siegel and Nima Haghpanah for their valuable feedback.}}
\author{Irfan Tekdir\thanks{Department of Economics, Pennsylvania State University}}
\date{\today}
\maketitle
\begin{abstract}
This paper investigates third-degree price discrimination under endogenous market segmentation. Segmenting a market requires access to information about consumers, and this information comes with a cost. I explore the trade-offs between the benefits of segmentation and the costs of information acquisition, revealing a non-monotonic relationship between consumer surplus and the cost of information acquisition for monopolist. I show that in some markets, allowing the monopolist easier access to customer data can also benefit customers. I also analyzed how social welfare reacts to changes in the cost level of information acquisition and showed that the non-monotonicity result is also valid in social welfare analysis.
\vspace{2in}\\
\noindent\textbf{Keywords:} third-degree price discrimination, costly information acquisition, monopolistic market segmentation, rational inattention

\vspace{0.1in}
\noindent\textbf{JEL Codes:} D42, D82, D83

\bigskip
\end{abstract}
\setcounter{page}{0}
\thispagestyle{empty}
\end{titlepage}
\pagebreak \newpage

\doublespacing

\section{Introduction} \label{sec:introduction}
% \begin{quote}
%     "Data is like garbage. You’d better know what you are going to do with it before you collect it."
%     \vspace{-0.2cm}
%     \begin{flushright}— Adapted from Mark Twain\end{flushright}
% \end{quote}
Price discrimination is an important agenda item for many industries. There are nearly endless techniques employed by companies for implementing price discrimination. One of the most commonly used techniques is third-degree price discrimination, which involves charging different prices to different groups of people. Sometimes, this type of price discrimination is easy to observe, such as when movie theaters offer discounted tickets to students. However, it can also be more subtle, such as when online retailers track customers' browsing histories using cookies and then charge different prices to different customers based on that information. This is more hidden because consumers may not be aware that prices are tailored based on their browsing history. Even if they are aware of this, they may not know the extent of the price discrimination. The key point for third-degree price discrimination is its heavy reliance on accessing information about consumer tastes.

Companies collect extensive amounts of information about their customers, segment the market based on this information, and charge different segments different prices. By doing so, they aim to extract more surplus from the market. However, gathering information about customers is not free. Companies incur significant costs for that. Another cost involved is processing that information. Amazon, Google, and Meta collect extensive user data to develop their products, and this data allows them to use pricing strategies. However, processing this data requires skilled data analysts/scientists and significant computing power, which comes at a cost to the company.

Moreover, recently data privacy concerns have increased. In 2018, there was a big scandal involving a political consulting firm named Cambridge Analytica which improperly exploited data from millions of Facebook profiles without their consent for their political campaigns.
To address increased concerns around data privacy, Apple enforced App Tracking Transparency(ATT) in April 2021. The European Union's General Data Protection Regulation (GDPR) restricted firms' data collection, storage, and usage by imposing new rules. All of these changes made it harder for companies like Facebook to access detailed consumer data. Therefore understanding how the increased cost of information acquisition affects markets from both the consumer's perspective and the efficiency perspective is important for policymakers.

 This paper explores the relationship between consumer welfare and the cost of information acquisition in monopolistic markets. \cite{BBM}--henceforth, BBM--  analyzed all (Consumer Surplus, Producer Surplus) pairs that are achievable through some segmentation. They introduced a \textit{surplus triangle} and demonstrated that any point within the triangle can be achieved through some segmentation. Additionally, they provided methods for constructing such segmentation. However, in their paper, segmentation does not involve a cost. In reality, segmentation requires information, and obtaining that information comes with a cost. The monopolist needs to choose the optimal amount of information to gather about her customers because, although she gains benefits by segmenting the market, she also incurs a cost for doing so.
This paper fills this gap and aims to determine the optimal segmentation in a market when the information acquisition is costly (\hyperref[prop:1]{Proposition \ref*{prop:1}}).(\hyperref[prop:3]{Proposition \ref*{prop:3}}) explains how consumer surplus responds to change in the cost of information, and (\hyperref[prop:4]{Proposition \ref*{prop:4}}) explains how efficiency in the market changes with the change in cost of information.

To address these questions, I developed a theoretical model based on BBM, in which I used a cost function that is proportional to the expected reduction in entropy \cite{shannon} to measure the cost of information acquisition. My results indicate that there is a \textit{non-monotonic} relationship between consumer surplus and the level of cost of information acquisition in some markets. This result might seem counterintuitive at first glance because one might think that if the monopolist accesses customer information at a lower cost, the monopolist would extract more surplus from the consumers, and the relationship between consumer surplus and cost level would be monotonic. However, my results show that in some markets, allowing the monopolist easier access to customer data can also benefit customers.

A simple intuition can help us understand this counterintuitive result better. Let's assume there are only two types of consumers: low-types and high-types, with more high-types than low-types. Initially, it's optimal for the monopolist to charge a high price before any segmentation.
If gathering information is costless for the monopolist, she would collect all the necessary data and engage in perfect price discrimination under complete information, leaving no consumer surplus. 

Conversely, if the cost of gathering information is too high, the monopolist would choose not to engage in any segmentation activity and stick with the high price. This would exclude some low-type consumers from the market transaction, and since high-types are paying the high price, the consumer surplus would again be zero.
However, when the cost of information acquisition is at a moderate level, the monopolist can create a segment where charging a low price is optimal. By doing so, she can separate the two types of consumers. In the low-price segment, there will be more low-type consumers enjoying the market transaction, and some high-type consumers will also be present in that segment because the monopolist was unable to separate the two types perfectly. These high-type consumers will start to enjoy some surplus in the low-price segment.
As the cost of information acquisition increases, the size of the low-price segment shrinks, and the overall consumer surplus decreases. This result is highly important for policymakers because it shows that under certain conditions, when the monopolist has more information about their customers, they can create tailored segments, which can benefit consumers. Therefore, it's crucial to understand that not allowing access to some consumers' preferences does not always help consumers.

Another contribution of this paper is the rationalization of the surplus triangle introduced in BBM. It is shown in BBM that any point in the surplus triangle can be achieved via some segmentation, and they demonstrate the method of constructing a segmentation that yields a particular (CS, PS) pair in the triangle. I define  a point in the surplus triangle as rationalizable if it can result from optimal market segmentation for a monopolist when the segmentation is endogenous. I prove that any interior point of surplus triangle is rationalizable in the binary case.
\subsection{Related Literature} \label{subsec:literature}
The purpose of this literature review is to provide a comprehensive overview of the existing research on third-degree price discrimination, particularly in the context of costly information acquisition. In economics, third-degree price discrimination is widely studied in the context of monopolistic and oligopolistic markets. \cite{pigou1920economics} was the first person who classified the types of price discrimination. A lot of researchers followed this paper and anayzed the applications of price discrimination. \cite{varian1989price} provides a nice survey about the economical applications of price dicriminations. Varian highlights that the most common form of price discrimination is likely third-degree price discrimination.

\cite{robinson1933economics} was the first to analyze the welfare effects of third-degree price discrimination. Most of the papers in the old literature about the welfare effects of price discrimination assumed that there is no cost associated with it, \cite{leeson2008cost} showed that when the costliness of price discrimination is taken into consderation, perfect price discrimination is generally socially inefficient for monopolists. The most important difference between the existing literature and my paper is that these studies of price discrimination that analyze the welfare effects of price discrimination take market segmentation as exogenous. However, in my model, segmentation is endogenous.

Recent studies of third-degree price discrimination have also explored the nuances in the context of monopolistic markets, digital markets, and two-sided platforms. \cite{BBM} analyzed the limits of price discrimination, showing all achievable welfare points in the monopolistics market.\cite{lu2019endogenous} analyzed endogenous third degree price discrimination in Hotelling model. \cite{bergemann2022third} compares third-degree price discrimination with uniform pricing, demonstrating that a uniform price can guarantee attaining half of the maximum profits achievable under a monopoly with concave profit functions. The application of third-degree price discrimination in two-sided markets is studied by \cite{de2023third}, who show that price discrimination can be beneficial for both sides of the markets and increase total welfare.

The applications of costly information acquisition represent another growing literature, and this paper can be considered one of its examples. \cite{matyskova2021bayesian} models costly information acquisition in a sender and receiver game. My paper can be seen as an application of this paper, where the monopolist acts as the receiver and consumers act as the sellers, with only the receiver paying the cost for the information. \cite{thereze2022screening} studies the screening problem under costly information. He shows that there is a non-monotonic relationship between information cost and the profit level of the menu designer. The key difference between my paper and that paper, except that one is studied in third-degree price discrimination and the other is studied in a screening setup, is that in my paper, the monopolist incurs the cost to acquire information about consumers, and in his paper, the consumer pays the information cost to acquire information about his valuations for the menu items that he is offered.

My paper also contributes to the literature on Bayesian privacy. \cite{eilat2021bayesian} studies the relationship between the privacy levels and welfare in the market. Both costly information acquisition and maintaining privacy are about making it harder for firms to access consumer data. The important difference is that in my paper, I allow any type of information if the monopolist is willing to pay the cost of that information. However, privacy imposes a restriction and states that even if you have the money to acquire information, you cannot obtain certain amounts of data due to privacy rights. My results combined with the results of this paper can provide good insight to policy makers about the potential trade-offs between privacy regulations and consumer welfare or market efficiency.

 For technical tools, I must mention important papers from the Rational Inattention literature, such as \cite{caplin_dean}, \cite{matejka}, and \cite{caplin2019rational}, which have significantly helped me.

The organization of the rest of the paper is as follows:
Section 2 introduces the model and provides the definition of optimal segmentation. Section 3 explains how to solve the monopolist's problem using two different techniques: the Concavification technique \cite{aumann1995repeated} from the Bayesian Persuasion \cite{Bayesian_Persuasion} literature, which provides mostly geometric intuition about the optimal segmentation in the binary case, and a more technical solution method from the Rational Inattention literature for more general cases. Section 4 provides a welfare analysis considering the cost of information. Section 5 discusses the rationalization of the surplus triangle introduced in BBM. Section 6 concludes the paper, and Section 7 provides an Appendix with the proofs of lemmas and propositions throughout the paper.
\section{Model} \label{sec:model}

There is a monopolist producing a single good in the market and a continuum of consumers with a total mass of 1. Each consumer has a valuation for the good from a finite valuation set $\Omega = \{\omega_1, \omega_2, \ldots, \omega_K\}$, where it is assumed that $0 < \omega_1 < \omega_2 < \ldots < \omega_K.$ Generic element in $\Omega$ is denoted by $\omega.$ Seller has zero marginal cost to produce the good.

A \textit{market} is a vector $\mu=(\mu_1,\mu_2,...,\mu_K)$ where $\mu_k$ is the ratio of consumers whose valuation for the good is equal to $\omega_k$. Let $\mathcal{M}$ denote the set of all possible markets.
\begin{equation*}
    \mathcal{M}\equiv \{\mu \in \mathbb{R}^K_{+}|\displaystyle\sum_{k=1}^{K}\mu_k=1\}
\end{equation*}

Suppose an \textit{aggregate market} is fixed and denoted as $\mu^* \in \mathcal{M}$. This can be interpreted as the monopolist's prior belief about the distribution of consumer valuations in the market. Knowing $\mu^* = (\mu_1^*, \mu_2^*, \ldots, \mu_K^*)$ means that if the monopolist randomly encounters a consumer from the population, the probability that this consumer's valuation is $\omega_k$ is equal to $\mu_k^*$. 

Let $s(p,\omega)$ and $b(p,\omega)$  be the utility of the monopolist(seller) and utility of type $\omega$ consumer(buyer) when the monopolist charges price p for that type in any market.

\begin{align*}
s(p,\omega) &= 
  \begin{cases} 
    p & \text{if } \omega \geq p, \\
    0 & \text{otherwise},
  \end{cases} \quad
b(p,\omega)= 
\begin{cases} 
    \omega- p & \text{if } \omega \geq p, \\
    0 & \text{otherwise},
  \end{cases}
\end{align*}

 It is assumed that the monopolist charges the optimal price, \( p^*(\mu) \) \footnote{Note that this price represents the optimal uniform monopoly pricing in the market $\mu$. Also, it is easy to see that the optimal price should be an element of $\Omega$.}
, in any market \( \mu \). That is,

\[
p^{*}(\mu) \in \arg\max_p \sum_{k=1}^{K} s(p,\omega_{k}) \mu_{k}
\]

Let $\mathcal{M}_k$ be the set of markets in which charging price $\omega_k$ is optimal for the monopolist. That is
\begin{equation*}
     \mathcal{M}_k\equiv \{\mu \in \mathcal{M}|\omega_k\in \arg\max_p \sum_{k=1}^{K} s(p,\omega_{k}) \mu_{k}\}
\end{equation*}

\begin{equation*}
    \mathcal{M} \equiv \bigcup_{k=1}^{K} \mathcal{M}_k
\end{equation*}

\begin{example}
Suppose $K=2$, $\Omega=\{1,2\}$, and $\mu_{2}^*=0.6$ represents the ratio of type two consumers. In this scenario, a market with two types can be represented using a one-dimensional simplex, which is essentially a simple line. If $\mu_{2}^* \leq 0.5$, then charging \$1 is optimal for the monopolist and we are in $\mathcal{M}_1$; if $\mu_{2}^* \geq 0.5$, then charging \$2 is optimal for the monopolist and we are in $\mathcal{M}_2$. \hyperref[fig:market]{Figure \ref*{fig:market}} illustrates the market representation for two types.
\end{example}

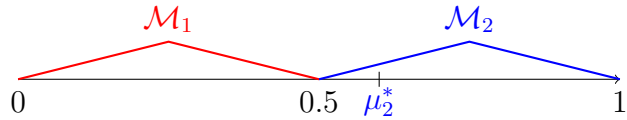
\begin{figure}[h]
    \centering
    \begin{tikzpicture}
    % Draw the x-axis
    \draw[->] (0,0) -- (8,0) node[anchor=north] {};
    
    % Draw the first triangular shape (M1)
    \draw[thick, red] (0,0) -- (2,0.5) -- (4,0);
    
    % Draw the second triangular shape (M2)
    \draw[thick, blue] (4,0) -- (6,0.5) -- (8,0);

    % Add labels for M1 and M2
    \node at (2,0.8) {\textcolor{red}{$\mathcal{M}_1$}};
    \node at (6,0.8) {\textcolor{blue}{$\mathcal{M}_2$}};
    
    % Add x-axis labels
    \node at (0,-0.3) {0};
    \node at (4,-0.3) {0.5};
    \node at (8,-0.3) {1};
    \node at (4.8,-0.3) {\textcolor{blue}{$\mu_{2}^*$}};
    % Draw vertical lines for clarity
    \draw[dashed] (4.8,0) -- (4.8,-0.1);
    \draw[dashed] (4.8,0) -- (4.8,0.1);
\end{tikzpicture}
    \caption{Market representation on a simple line}
    \label{fig:market}
\end{figure}

  \textit{Segmentation} $\tau$ of an aggregate market $\mu^*$ can be considered as probability distribution over $\mathcal{M}$ in which $\tau(\mu^{s})$ is the proportion of consumers in market $\mu^{s}$ \footnote{ Note that a generic element of $\text{supp}(\tau)$ is denoted by $\mu^s$, representing any \textit{segment} in $\tau.$} such that $\displaystyle\sum_{\mu^{s} \in supp (\tau)}\tau(\mu^{s})\mu^{s}=\mu^*$. The set of all possible segmentation is equal to \\
 \begin{equation*}
     \mathcal{T}=\{\tau \in \Delta(\mathcal{M})|\displaystyle\sum_{\mu \in supp (\tau)}\tau(\mu^{s})\mu^{s}=\mu^*, |supp(\tau)| < \infty\}
 \end{equation*} 

In simple terms, segmentation involves accessing additional information about customers, categorizing them into different segments, and tailoring prices for each specific segment.

Segmentation strategies share a common feature: the need to access additional information about  customers, which often comes at a cost. Whether it's purchasing cookies, entering into data-sharing agreements, or creating a platform to collect more data, each method incurs some cost for a monopolist.

I will model the cost of information acquisition for the monopolist using an Information Theoretic approach. Initially, the monopolist had some uncertainty about her customers in the aggregate market $\mu^{*}$, and she gathered some information to segment this aggregate market. She still has some uncertainty for each segment that she created, but she has overall less uncertainty compared to the beginning. To reduce her uncertainty, she needed to incur some costs to acquire additional information about her consumers' tastes. I will assume a posterior-separable cost function by following \cite{caplin_dean}, which will be in the following format:
\begin{equation*}
c(\tau;\mu^{*},k) = k\left[H(\mu^{*}) - \mathbb{E}_{\tau}(H(\mu^s))\right]
\end{equation*}

where $H: \Delta(\Omega) \rightarrow \mathbb{R}$ is assumed as strictly concave and continuous function. Here, $k$ denotes the cost parameter that ranges from 0 to infinity. When $k$ equals 0, the monopolist will always employ perfect price discrimination, and when $k$ goes to infinity, the monopolist will approach to uniform monopoly pricing in the aggregate market and will not segment the market due to the high cost. However, as $k$ varies, the monopolist needs to decide on the optimal level of information acquisition, as segmentation brings benefits but still incurs costs. I will specifically focus on Shannon entropy, \cite{shannon} where $H(\mu)=-\sum_{k=1}^{K}\mu_k\ln(\mu_{k})$ satisfies my two general assumptions, and it is assumed by convention that $0\ln 0=0$. The cost of information or the cost of segmentation will be modeled as proportional to expected reduction in entropy as it is done in \cite{Costly_Persuasion} and in most of the recent papers. 

The monopolist has to make two decisions to maximize her profit: the first involves charging the optimal uniform monopoly pricing, $p^{*}(\mu^{s})$, for each segment $\mu^{s}$, and the second involves choosing the optimal level of information acquisition, given the optimal pricing decision and cost of information acquisition. To determine the optimal level of information acquisition, the monopolist solves the following maximization problem:

\begin{align*}
\max_{\tau \in \mathcal{T}} \quad & \mathbb{E}_{\tau}\left[\underbrace{\sum_{k=1}^{K} s(p^{*}(\mu^{s}), \omega_{k}) \mu^{s}_{k} + kH(\mu^s)}_{\text{Let's call this part } v(\mu^{s}):\text{the net utility of having segment $\mu^{s}$}}\right] - kH(\mu^{*})
\end{align*}

While there are infinitely many ways to segment the market, the subsequent lemma narrows our focus to a finite number of possible segmentations. This will help us when we characterize the optimal segmentation.

 \begin{lemma} Let $\tau^*$ be the optimal segmentation for monopolist, then $|supp(\tau^*)| \leq |\Omega|$.
 \label{lemma1}
\end{lemma}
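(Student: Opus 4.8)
The plan is to recognize the monopolist's problem as a concavification problem and then apply a Carath\'eodory-type dimension count. First I would discard the additive constant $-kH(\mu^{*})$, so that an optimal segmentation maximizes $\mathbb{E}_{\tau}[v(\mu^{s})]$ subject to the Bayes-plausibility constraint $\sum_{\mu^{s}\in\mathrm{supp}(\tau)}\tau(\mu^{s})\mu^{s}=\mu^{*}$. Writing the revenue term as $r(\mu)=\max_{1\le j\le K}\,\omega_{j}\sum_{k=j}^{K}\mu_{k}$ exhibits it as a maximum of finitely many linear functions, hence convex and continuous; since $kH$ is continuous on the compact simplex $\mathcal{M}$, so is $v=r+kH$. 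Consequently $\max_{\tau}\mathbb{E}_{\tau}[v]$ is exactly the value of the concave envelope of $v$ at $\mu^{*}$, and a maximizer with finite support exists.

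Next I would run a support-reduction argument on an optimal $\tau^{*}$. Label $\mathrm{supp}(\tau^{*})=\{\mu^{1},\dots,\mu^{n}\}$ with weights $\tau_{i}:=\tau^{*}(\mu^{i})>0$, and suppose for contradiction that $n>K$. I look for a perturbation of the weights alone, $\tau_{i}\mapsto\tau_{i}+\varepsilon\delta_{i}$, that keeps the barycenter fixed, i.e. $\sum_{i=1}^{n}\delta_{i}\mu^{i}=0$. This is a homogeneous linear system of $K$ equations in the $n$ unknowns $\delta_{i}$, and summing its $K$ coordinates and using $\sum_{k}\mu^{i}_{k}=1$ shows it automatically forces $\sum_{i}\delta_{i}=0$; thus feasibility (both the barycenter and the normalization) is preserved. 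Because $n>K$, the system has a nontrivial solution $\delta\neq 0$, which necessarily has both positive and negative entries. Along this direction the objective $\sum_{i}(\tau_{i}+\varepsilon\delta_{i})v(\mu^{i})$ is affine in $\varepsilon$ with slope $\sum_{i}\delta_{i}v(\mu^{i})$; optimality of $\tau^{*}$ forces this slope to vanish (otherwise one sign of $\varepsilon$ strictly improves), so I may slide $\varepsilon$ in the feasible direction until some weight first hits $0$. This drops a segment while preserving both feasibility and the objective value, strictly decreasing $|\mathrm{supp}|$. Iterating yields an optimal segmentation with $|\mathrm{supp}|\le K=|\Omega|$.

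The sharp constant $|\Omega|$ (rather than $|\Omega|+1$) is the crux, and it comes precisely from the observation that the normalization $\sum_{i}\delta_{i}=0$ is \emph{not} an independent constraint but is implied by the $K$ barycenter equations, because every segment lies on the probability simplex. Equivalently, in the concavification picture the optimal point lies on the \emph{boundary} of the convex hull of the graph of $v$, so Carath\'eodory's theorem may be applied within a supporting hyperplane, saving one point. The only technical care needed elsewhere is confirming that $v$ is upper semicontinuous so that the concave envelope is attained and the support is genuinely finite; this is immediate once the revenue term is written as a maximum of linear functions as above. I expect the main obstacle to be essentially bookkeeping: ensuring the perturbation stays within $\mathcal{T}$ (nonnegative weights, fixed barycenter) throughout the iteration, which the constraint $\sum_{i}\delta_{i}\mu^{i}=0$ guarantees.
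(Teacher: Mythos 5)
Your support-reduction argument is internally sound, but it proves a weaker statement than the lemma. The perturbation direction $\delta$ with $\sum_i \delta_i \mu^i = 0$ exists, the slope $\sum_i \delta_i v(\mu^i)$ must indeed vanish at an optimum, and sliding until a weight hits zero preserves both feasibility and the objective value. But precisely because the objective is unchanged along the slide, you never contradict the optimality of the original $\tau^*$: you only exhibit \emph{another} optimal segmentation with smaller support. Your conclusion is existential (``some optimal segmentation has at most $K$ segments''), whereas the lemma---and the paper's proof, which is a proof by contradiction---is universal: \emph{any} optimal $\tau^*$ satisfies $|\mathrm{supp}(\tau^*)| \leq |\Omega|$. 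The distinction is not cosmetic, since the paper goes on (in Definition \ref{def:definition} and the tie-breaking discussion that follows) to reason about the whole set of optimal segmentations.

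The missing ingredient is the strict concavity of $H$, which your argument never invokes---and no argument that ignores it can close the gap, because for merely concave (e.g.\ piecewise linear) $H$ there can exist optimal segmentations with more than $K$ support points, so the universal claim would be false. The paper uses strict concavity as follows: if $|\mathrm{supp}(\tau^*)| > K$, the pigeonhole principle gives two distinct segments lying in the same price region $\mathcal{M}_m$; since $\mathcal{M}_m$ is convex and revenue is linear on it, merging the two segments into their barycenter preserves Bayes plausibility and revenue, while Jensen's inequality applied to the strictly concave $H$ strictly reduces the information cost. This strict improvement contradicts optimality, yielding the universal bound. If you prefer to stay within your hyperplane picture, the fix is to note that every support point of an optimal $\tau^*$ must lie where the supporting hyperplane of the concave envelope at $\mu^*$ touches $v$, and that $v = r + kH$ is strictly concave on each of the $K$ convex regions $\mathcal{M}_m$, so the hyperplane can touch $v$ at most once per region; either way, strict concavity must enter the argument.
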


\begin{proof}

Detailed proof is in Appendix

\end{proof}

 We are ready to give a full definition of optimal segmentation.

\begin{definition}
    The \textit{optimal segmentation is a tuple, ($\tau^{*}$,$p^{*}(.))$}, that satisfies the followings:
    
    \begin{enumerate}
        \item $p^{*}(.)$ solves $\max_p \sum_{k=1}^{K} s(p,\omega_{k}) \mu^{s}_{k}$ for each $\mu^{s} \in supp (\tau^{*}).$ 
        \item Given $p^{*}(.)$ and $c(\tau;\mu^{*},k)$, $\tau^{*}$ solves
        \begin{align*}
         \max_{\tau \in  \mathcal{T}} \quad & \mathbb{E}_{\tau}[ v(\mu^{s})]
    \end{align*}
     \item $|supp(\tau^*)| \leq |\Omega|$
    \end{enumerate}
    \label{def:definition}
\end{definition}

In \hyperref[lemma1]{Lemma \ref*{lemma1}} , there is no mention of the uniqueness of the optimal segmentation. To address this, I will assume that if the monopolist considers two segmentations optimal for herself, she will prefer the one that consumers would prefer. Therefore, I will focus on the consumer-preferred optimal segmentation for the monopolist. From now on I will use $\tau^{*}$ to refer the optimal segmentation to ensure clarity in reference.

Given the optimal segmentation $\tau^{*}$ in the aggregate market $\mu^*$, we can define the \textit{Consumer Surplus ($CS)$}, \textit{Producer Surplus ($PS)$}, and \textit{Total Surplus ($TS)$} in the optimal segmentation as follows:

\begin{align*}
         CS &= \sum_{\mu^{s} \in \text{supp}(\tau^{*})} \tau^{*}(\mu^{s}) [\sum_{j=1}^{K} b(p^{*}(\mu^{s}),\omega_{j})\mu^{s}_{j}], \\
      PS &= \sum_{\mu^{s} \in \text{supp}(\tau^{*})} \tau^{*}(\mu^{s}) [v(\mu^{s})]\\
     TS &=\sum_{\mu^{s} \in \text{supp}(\tau^{*})} \tau^{*}(\mu^{s}) [v(\mu^{s})+\sum_{j=1}^{K}b(p^{*}(\mu^{s}),\omega_{j}))\mu^{s}_{j}]
\end{align*}

By following \hyperref[def:definition]{Lemma \ref*{def:definition}}, we can state the monopolist's problem as follows:
\begin{equation*}
  \begin{aligned}
         \displaystyle\max_{\{\mu^{1},\mu^{2},..,\mu^{K}\}} \quad &\tau(\mu^{1})v(\mu^{1})+\tau(\mu^{2})v(\mu^{2})+..+\tau(\mu^{K})v(\mu^{K})& \\
       \text{subject to} &
      \displaystyle\sum_{j=1}^{K}\tau(\mu^{j})\mu^{j}=\mu^* ,
      \displaystyle\sum_{j=1}^{K}\tau(\mu^{j})=1 , \quad \forall j, \tau(\mu^{j})  \geq 0 .
  \end{aligned}
\end{equation*}

\section{Solving Monopolist’s Problem} \label{sec:problem}

To solve the monopolist's problem, we can adapt the following proposition from the \cite{caplin_dean} paper, which provides an interior solution to our problem.
\begin{proposition}
\label{prop:1}
Let $\mu_{i}^{j}$ be the ratio of type $i$ consumers in segment $j$. Given $\Omega$ and $\mu^{*}$, the optimal segmentation $\tau^{*}$ satisfies the following conditions:

\begin{enumerate}
    \item \textbf{Bayes Plausibility Condition}: 
    \[
    \displaystyle\sum_{\mu \in \mathrm{supp}(\tau^{*})}\tau^{*}(\mu^{s})\mu^{s}=\mu^*
    \]

    \item \textbf{Invariant Likelihood Ratio Equations for Chosen Segments}: 
    Given $\mu^{j},\mu^{m} \in \mathrm{supp}(\tau^{*})$ and for any $\omega_{i} \in \Omega$,
    \[
    \frac{\mu_{i}^{j}}{e^{\frac{s(p_{j},\omega_{i})}{k}}}=\frac{\mu_{i}^{m}}{e^{\frac{s(p_{m},\omega_{i})}{k}}}
    \]

    \item \textbf{Likelihood Ratio Inequalities for Not-Chosen Segments}: 
    Given $\mu^{j}\in \mathrm{supp}(\tau^{*})$ and  $\mu^{t} \notin \mathrm{supp}(\tau^{*})$,
    \[
    \sum_{i=1}^{K}\left[\frac{\mu_{i}^{j}}{e^{\frac{s(p_{j},\omega_{i})}{k}}}\right]e^{\frac{s(p_{t},\omega_{i})}{k}} \leq 1
    \]
\end{enumerate}
\end{proposition}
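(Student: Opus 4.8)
The plan is to read the monopolist's problem as a concavification exercise. Since the objective $\max_{\tau\in\mathcal{T}}\mathbb{E}_\tau[v(\mu^s)]$ is \emph{linear} in $\tau$ and the only coupling constraint is Bayes plausibility $\mathbb{E}_\tau[\mu^s]=\mu^*$, its value is the concave closure of $v$ evaluated at $\mu^*$, and Condition~1 is nothing more than feasibility of $\tau^*$. The whole argument then exploits the posterior-separable structure $v(\mu)=\hat v(\mu)+kH(\mu)$, where $\hat v(\mu)=\max_p\sum_i s(p,\omega_i)\mu_i$ is piecewise linear (equal to $\sum_i s(p_j,\omega_i)\mu_i$ on each region $\mathcal{M}_{p_j}$) and $H$ is strictly concave. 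By Lemma~\ref{lemma1} the support of $\tau^*$ is finite, so the program is finite-dimensional and I can legitimately pass to a Lagrangian/KKT analysis.

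For Condition~2 I would reparametrize using the joint (type, segment) masses $m_i^j=\tau(\mu^j)\mu_i^j$, with $q_j=\sum_i m_i^j$ and posterior $\mu_i^j=m_i^j/q_j$. The entropy block $\tau(\mu^j)H(\mu^j)=-\sum_i m_i^j\ln(m_i^j/q_j)$ is the perspective of $-x\ln x$, hence jointly concave in the $m$'s, and the constraints $\sum_j m_i^j=\mu_i^*$ are linear, so KKT conditions are necessary and sufficient. The key (mildly surprising) computation is that the gradient of the entropy block with respect to $m_i^j$ collapses to $-k\ln\mu_i^j$; attaching multipliers $\lambda_i$ to the marginal constraints, stationarity for a chosen segment reads $s(p_j,\omega_i)-k\ln\mu_i^j-\lambda_i=0$, i.e.\ $\mu_i^j/e^{s(p_j,\omega_i)/k}=e^{-\lambda_i/k}$. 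As the right-hand side is independent of $j$, equating it across two chosen segments $\mu^j,\mu^m$ gives Condition~2, with $e^{-\lambda_i/k}$ playing the role of the invariant likelihood ratio.

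For Condition~3 I would switch to the dual picture supplied by the concave closure: optimality means the affine map $\mu\mapsto\sum_i\lambda_i\mu_i$ dominates $v$ on all of $\mathcal{M}$, with equality on $\mathrm{supp}(\tau^*)$. Evaluating the slack of this dominance inequality at the exponential-form posterior $\hat\mu_i\propto e^{-\lambda_i/k}e^{s(p_t,\omega_i)/k}$ associated with an unused price $p_t$, a short computation reduces the inequality to the statement that the normalizer $Z_t=\sum_i e^{-\lambda_i/k}e^{s(p_t,\omega_i)/k}\le 1$. Substituting $e^{-\lambda_i/k}=\mu_i^j/e^{s(p_j,\omega_i)/k}$ from any chosen segment $j$ turns this into Condition~3, while the same identity gives $Z=1$ on the support, tying back consistently to Condition~2.

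The main obstacle is the boundary behavior. The naive coordinatewise stationarity breaks down for an unused segment because it would involve $\ln m_i^t=\ln 0$; this is precisely why Condition~3 appears as a single aggregated inequality rather than $K$ separate ones, and I would establish it through the supporting-hyperplane argument above (equivalently, by signing the one-sided directional derivative of introducing a fresh segment carrying mass $\varepsilon$). Two supporting points must also be secured: first, that every optimal posterior is interior to the simplex, which follows from the steepness of Shannon entropy, $\partial H/\partial\mu_i\to+\infty$ as $\mu_i\to 0$, ruling out corner posteriors and validating the interior first-order conditions; and second, existence of the multipliers $\lambda$, for which compactness of $\mathcal{M}$ together with the finite-support reduction of Lemma~\ref{lemma1} make the concave-closure duality go through. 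The non-smoothness of $\hat v$ across region boundaries is a minor issue: since $p_j$ is fixed as the optimal price of segment $j$, $v$ coincides with $\sum_i s(p_j,\omega_i)\mu_i+kH$ on the relevant region and the derivation is unaffected.
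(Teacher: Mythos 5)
Your proposal is correct, and it is both aligned with the paper's argument and strictly more complete than it. For Condition 2 you and the paper do essentially the same thing: attach multipliers to the Bayes-plausibility constraint, invoke the steepness of Shannon entropy to force chosen posteriors into the interior, and read the invariant likelihood ratio off the first-order conditions. Your change of variables to joint masses $m_i^j=\tau(\mu^j)\mu_i^j$ is nonetheless a genuine improvement: each entropy block becomes a perspective function, so the program is concave in the decision variables and the KKT conditions are sufficient as well as necessary, whereas the paper's Lagrangian is written jointly over $\tau(\mu^j)$ and $\mu_i^j$ (a non-concave parametrization because of the products) and only delivers necessity; your gradient computation $\partial\bigl[-k\sum_i m_i^j\ln(m_i^j/q_j)\bigr]/\partial m_i^j=-k\ln\mu_i^j$ is exactly right, since the $+1$ terms cancel against the derivative through $q_j$. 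The substantive difference is Condition 3: the paper does not prove it at all, deferring to Caplin and Dean, while you supply the missing argument. Your duality step is sound: with $\lambda_i$ pinned down by stationarity, the affine map $\mu\mapsto\sum_i\lambda_i\mu_i$ agrees with $v$ on the support, and its slack against the price-$p_t$ branch of $v$, maximized over the simplex, is the log-sum-exp $k\ln Z_t$ with $Z_t=\sum_i e^{-\lambda_i/k}e^{s(p_t,\omega_i)/k}$, so dominance is exactly $Z_t\le 1$, which becomes Condition 3 upon substituting $e^{-\lambda_i/k}=\mu_i^j/e^{s(p_j,\omega_i)/k}$. The one step worth spelling out in a final write-up is the normalization you use implicitly: stationarity together with $\sum_i\mu_i^j=1$ forces $Z_j=1$ on chosen segments, which is why the supporting hyperplane carries no constant term and why the threshold in Condition 3 is exactly $1$. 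Your diagnosis of why Condition 3 is a single aggregated inequality (coordinatewise stationarity would require $\ln 0$ for unused segments) is also the correct one.
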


\begin{proof}
 Detailed proof is in Appendix.
\end{proof}

The proposition states that optimal segmentation should first satisfy the Bayes plausibility conditions, which are captured by default in the definition of segmentation. The second condition indicates that when the monopolist decides on a segment, her decision about which fraction of each type should be included in which segment is influenced by the monopolist's utility derived from those types within the segments and the third condition is about for non-chosen segments.

Overall, Proposition 1 provides the solution for optimal segmentation in the market for a given specific cost function. However, it doesn't fully characterize the optimal market segmentation. For example, Proposition 1 is silent about the uniqueness of the optimal solution and does not provide any conditions on it. We will see that the non-uniqueness of optimal points is not an issue in the binary case, but for the general case, one needs to impose further restrictions on $\Omega$ to ensure a unique solution.

With an increased level of cost, the monopolist may find that segmenting the market is too costly and may prefer to follow a uniform monopoly pricing rule. The gain from segmentation can be defined as the monopolist's expected profit after segmentation minus the expected profit from the aggregate market. If the gain from segmentation is positive, the monopolist finds it optimal to segment the market. In some markets, due to the high level of segmentation, it is not beneficial to segment the market. However, the following proposition states that there are some special markets in which the monopolist always chooses to segment regardless of the cost level.
\begin{proposition}
\label{prop:2}
When the monopolist is indifferent between charging two different prices in the aggregate market $\mu^{*}$, that is $\mu^{*} \in \mathcal{M}_r \cap \mathcal{M}_s$ for some $r<s$, then she always chooses to segment the market regardless of the cost level.  
\end{proposition}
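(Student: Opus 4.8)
The plan is to show that at such a $\mu^*$ there always exists a (two-segment) segmentation $\tau$ whose expected net utility strictly exceeds that of no segmentation, so that by condition~2 of Definition~\ref{def:definition} the monopolist strictly prefers to segment. Since not segmenting corresponds to $\tau=\delta_{\mu^*}$ and yields $\mathbb{E}_\tau[v]=v(\mu^*)$, while the constant $-kH(\mu^*)$ is common to both options, it suffices to exhibit a feasible $\tau$ with $\mathbb{E}_\tau[v(\mu^s)]>v(\mu^*)$. Writing $\ell_p(\mu):=\sum_{k}s(p,\omega_k)\mu_k$ (linear in $\mu$) and $\Pi(\mu):=\max_p\ell_p(\mu)$, the net-utility function decomposes as $v=\Pi+kH$, where $\Pi$ is convex and piecewise linear (a maximum of linear maps) and $H$ is strictly concave; after restricting to the face spanned by $\mathrm{supp}(\mu^*)$ (which every segment must respect by Bayes plausibility, since nonnegative weights summing to a zero coordinate force that coordinate to vanish in each segment), $H$ is $C^2$ on a neighborhood of $\mu^*$.

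First I would construct the candidate as a symmetric split $\mu^{\pm}=\mu^*\pm\varepsilon d$, each receiving weight $\tfrac12$, where $d$ lies in the tangent space of the simplex restricted to $\mathrm{supp}(\mu^*)$ (so $\sum_k d_k=0$) and is chosen so that $\gamma:=(\nabla\ell_{\omega_r}-\nabla\ell_{\omega_s})\cdot d>0$. Such a $d$ exists because the indifference $\mu^*\in\mathcal{M}_r\cap\mathcal{M}_s$ forces positive mass both on types with valuation in $[\omega_r,\omega_s)$ and on types with valuation at least $\omega_s$; hence the constant gradient difference $\nabla\ell_{\omega_r}-\nabla\ell_{\omega_s}$ is non-constant on $\mathrm{supp}(\mu^*)$ and has nonzero projection onto $\{d:\sum_k d_k=0\}$. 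For $\varepsilon$ small the points $\mu^{\pm}$ stay in the relative interior of the support face, and $\tfrac12\mu^++\tfrac12\mu^-=\mu^*$ gives Bayes plausibility automatically. The key estimate is the profit gain: since $\Pi\geq\ell_{\omega_r}$ and $\Pi\geq\ell_{\omega_s}$ everywhere, while $\ell_{\omega_r}(\mu^*)=\ell_{\omega_s}(\mu^*)=\Pi(\mu^*)$ by indifference, I would bound
\[
\tfrac12\Pi(\mu^+)+\tfrac12\Pi(\mu^-)-\Pi(\mu^*)\ \geq\ \tfrac12\ell_{\omega_r}(\mu^+)+\tfrac12\ell_{\omega_s}(\mu^-)-\Pi(\mu^*)=\tfrac{\varepsilon\gamma}{2},
\]
which is \emph{first order} in $\varepsilon$ and independent of $k$.

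Finally I would control the entropy term: because $H$ is $C^2$ near $\mu^*$ and $d$ is fixed, a second-order Taylor expansion gives $\tfrac12 H(\mu^+)+\tfrac12 H(\mu^-)-H(\mu^*)=-\tfrac{\varepsilon^2}{2}\,C+o(\varepsilon^2)$ with $C:=-d^{\top}\nabla^2H(\mu^*)d>0$ by strict concavity, so the entropy loss is only \emph{second order}. Combining the two estimates, $\mathbb{E}_\tau[v]-v(\mu^*)\geq \tfrac{\varepsilon\gamma}{2}-\tfrac{kC}{2}\varepsilon^2+o(\varepsilon^2)$, which is strictly positive once $\varepsilon<\gamma/(kC)$; since $k$ is finite this choice is available for every cost level, proving segmentation is strictly optimal. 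The crux of the argument---and the step I expect to be the main obstacle---is precisely this first-order-versus-second-order comparison: the indifference condition creates a genuine convex kink of $\Pi$ at $\mu^*$ producing an $O(\varepsilon)$ gain, and I must verify that a feasible splitting direction lives inside the support face and that $H$ is smooth there, so the offsetting entropy cost remains $O(\varepsilon^2)$. Boundary cases (zero coordinates of $\mu^*$, where Shannon entropy has unbounded derivatives) are dispatched once and for all by passing to the support face, on whose relative interior $H$ is smooth.
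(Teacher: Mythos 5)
Your proposal is correct and takes essentially the same route as the paper's own proof: the paper splits $\mu^*$ with equal weights into $\mu^* \pm \epsilon(e_r - e_s)$ (a concrete instance of your direction $d$), computes the first-order profit gain $\tfrac{1}{2}\omega_s\epsilon$ created by the kink at indifference, and uses a Taylor expansion of the entropy to show the information cost is only $O(\epsilon^2)$, so a sufficiently small $\epsilon$ beats any finite $k$. Your version is just a mildly more abstract packaging (general tangent direction, Hessian bound, explicit support-face care) of the identical first-order-gain-versus-second-order-cost argument.
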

\begin{proof}
Detailed proof is in Appendix.
\end{proof}

The intuition behind Proposition 2 is as follows: When you are indifferent between two actions—meaning both actions give the same utility and you are unsure which action to choose—then, regardless of the cost, you should acquire the information that helps you make your decision with more precision. The proof is based on the idea that a small amount of information may be more beneficial than its cost when you are indifferent between two actions. Proposition 2 states that there will always be such a small level of information.
\subsection{Binary Case, $|\Omega|=2$}
In the binary case where $\Omega= \{\omega_1, \omega_2\}$ with $\omega_1 < \omega_2$, we can employ two different tools from two distinct literatures. The first tool is derived from the \textit{Rational Inattention} approach, as detailed in \hyperref[prop:1]{Proposition \ref*{prop:1}}, to identify the optimal segments in the market. The second tool, known as the \textit{Concavification} approach, comes from the \textit{Bayesian Persuasion} literature. While the first tool can be used for general cases, the Concavification method is particularly useful for solving binary action problems.
\subsubsection{Rational Inattention Approach}
With two types of consumers, the monopolist either chooses not to segment the market due to the high cost of information or segments the market into two segments. The optimal segments, $\mu^1$ and $\mu^2$, would satisfy the following equations, which are necessary and sufficient conditions for optimality:

 \[
  \frac{\mu_{1}^{1}}{e^{\frac{\omega_1}{k}}}= \frac{\mu_{1}^{2}}{1}
    \]

\[
    \frac{\mu_{2}^{1}}{e^{\frac{\omega_1}{k}}}= \frac{\mu_{2}^{2}}{e^{\frac{\omega_2}{k}}}
    \]
These equations imply that the ratio of corresponding types in the optimal segments is adjusted according to their valuations. Specifically, type-1 consumers will buy the good in segment 1 but not in segment 2, making it more likely that the monopolist will place type-1 consumers in segment 1. On the other hand, type-2 consumers will pay $\omega_1$ in segment 1 but $\omega_2$ in segment 2, making it more likely that the monopolist will place type-2 consumers in segment 2.

In order to find the optimal segmentation, we also need to exploit Bayes Plausibility condition. By doing so, we obtain a unique solution for the optimal segmentation in the binary case and then we can have welfare effect of costly information acquisition as it is discussed in \hyperref[analysis]{Section \ref*{sec:analysis}}.

\subsubsection{Concavification Approach}

In the binary case, there is a powerful tool called the Concavification technique, which was first introduced in \cite{aumann1995repeated} and gained popularity with the \cite{Bayesian_Persuasion} paper. The technique is all about splitting the aggregate market $\mu^{*}$ into segments $\mu^{s}$ in a way that maximizes the expectation of net utilities from segments. For a given $\mu^{*}$, the maximum occurs at the smallest concave function over the net utility functions when it is calculated on $\mu^{*}$.

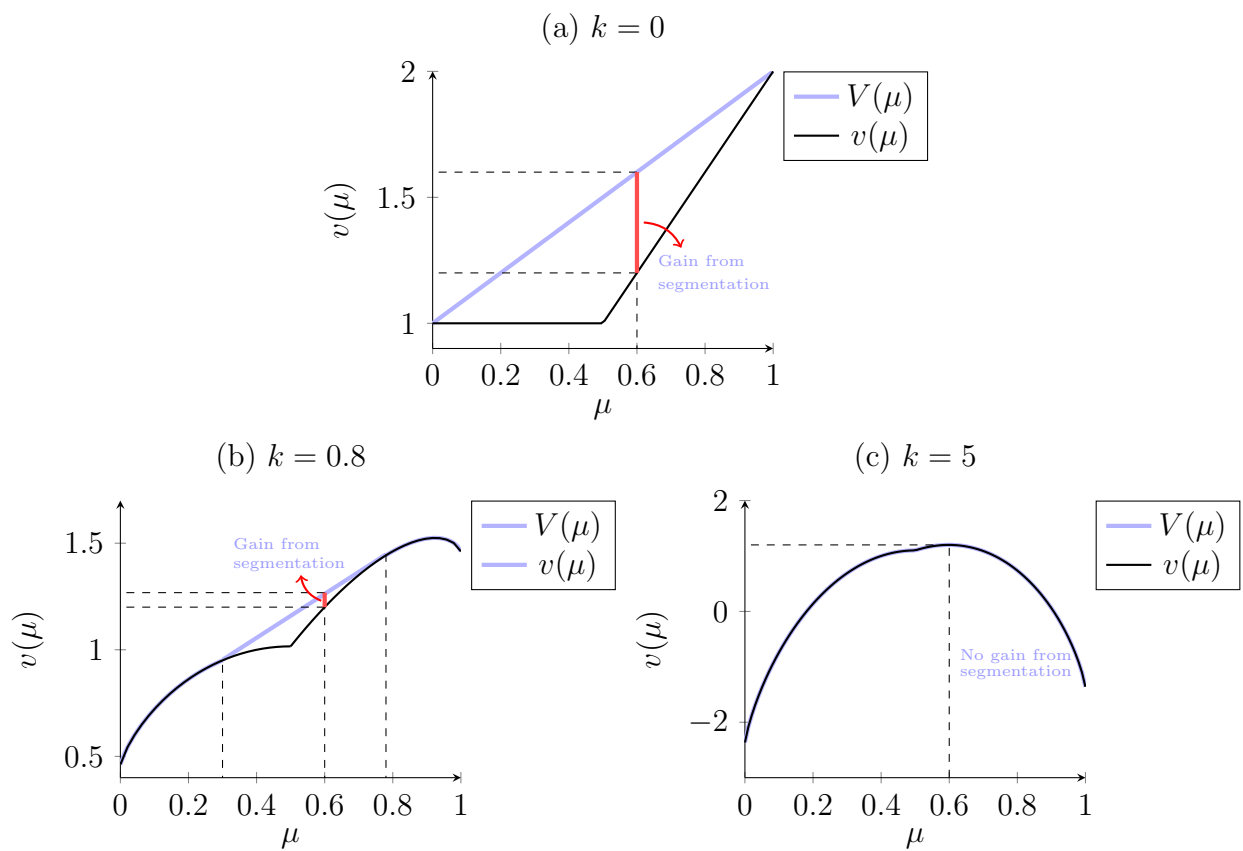
\begin{figure}[h]
    \centering
    \begin{tikzpicture}
        \begin{axis}[
            width=0.37\textwidth,
            title={(a) $k = 0$},
            xlabel={$\mu$},
            ylabel={$v(\mu)$},
            xmin=0, xmax=1,
            ymin=0.9, ymax=2,
            grid style={line width=.1pt, draw=gray!10},
            major grid style={line width=.2pt,draw=gray!50},
            axis lines=left,
            legend pos=outer north east
        ]
            \addplot[
                domain=0:1,
                samples=100,
                ultra thick,
                blue!30
            ]
            {1+x};
    \addlegendentry{$V(\mu)$}
            
            \addplot[
                domain=0:1,
                samples=100,
                thick,
                black
            ]
            {x <= 0.5 ? 1 : 2*x};
        \addlegendentry{$v(\mu)$}
            
             \draw [dashed, black] (axis cs:0.6, 0) -- (axis cs:0.6, 1.6);
            \draw [dashed, black] (axis cs:0.6, 1.2) -- (axis cs:0, 1.2);
            \draw [dashed, black] (axis cs:0.6, 1.6) -- (axis cs:0, 1.6);
%Gain from segmentation
        \draw [ultra thick, red!70] (axis cs:0.6, 1.2) -- (axis cs:0.6, 1.6);
    \node [blue!40, right] at (axis cs:0.63, 1.25) {\tiny Gain from };
    \node [blue!40, right] at (axis cs:0.63, 1.15) {\tiny segmentation };
% Adding the arc arrow
         \draw [red, thick, ->, bend left] (axis cs:0.62, 1.4) to (axis cs:0.73, 1.3);
        \end{axis}
    \end{tikzpicture}

\begin{tikzpicture}
    \begin{axis}[
        width=0.37\textwidth,
        title={(b) $k = 0.8$},
        xlabel={$\mu$},
        ylabel={$v(\mu)$},
        xmin=0, xmax=1,
        ymin=0.4, ymax=1.7,
        grid style={line width=.1pt, draw=gray!10},
        major grid style={line width=.2pt,draw=gray!50},
        axis lines=left,
        legend pos=outer north east
    ]

    % Plot for V_c
    \addplot[
        domain=0:0.3,
        ultra thick,
        blue!30
    ] {1 - 0.8*(x*ln(x) + (1-x)*ln(1-x) - 0.6*ln(0.6) - 0.4*ln(0.4))};

    \addplot[
        domain=0.3:0.78,
        ultra thick,
        blue!30
    ] {0.644 + 1.03*x};

    \addplot[
        domain=0.78:1,
        ultra thick,
        blue!30
    ] {2*x - 0.8*(x*ln(x) + (1-x)*ln(1-x) - 0.6*ln(0.6) - 0.4*ln(0.4))};
    \addlegendentry{$V(\mu)$}

    % Plot for v(\mu)
    \addplot[
        domain=0:0.5,
        thick,
        black
    ] {1 - 0.8*(x*ln(x) + (1-x)*ln(1-x) - 0.6*ln(0.6) - 0.4*ln(0.4))};

    \addplot[
        domain=0.5:1,
        thick,
        black
    ] {2*x - 0.8*(x*ln(x) + (1-x)*ln(1-x) - 0.6*ln(0.6) - 0.4*ln(0.4))};
    \addlegendentry{$v(\mu)$}
    %Optimal Segments
    \draw [dashed, black] (axis cs:0.3, 0) -- (axis cs:0.3, 0.953);
    \draw [dashed, black] (axis cs:0.78, 0) -- (axis cs:0.78, 1.4474);

    % Dashed lines
    \draw [dashed, black] (axis cs:0.6, 0) -- (axis cs:0.6, 1.268);
    \draw [dashed, black] (axis cs:0.6, 1.2) -- (axis cs:0, 1.2);
    \draw [dashed, black] (axis cs:0.6, 1.268) -- (axis cs:0, 1.268);
     \draw [ultra thick, red!70] (axis cs:0.6, 1.2) -- (axis cs:0.6, 1.268);

      \node [blue!40, right] at (axis cs:0.3, 1.5) {\tiny Gain from };
    \node [blue!40, right] at (axis cs:0.3, 1.4) {\tiny segmentation };
% Adding the arc arrow
         \draw [red, thick, ->, bend left] (axis cs:0.59, 1.23) to (axis cs:0.53, 1.35);

    \end{axis}
\end{tikzpicture}
    \begin{tikzpicture}
        \begin{axis}[
            width=0.37\textwidth,
            title={(c) $k = 5$},
            xlabel={$\mu$},
            ylabel={$v(\mu)$},
            xmin=0, xmax=1,
            ymin=-3, ymax=2,
            grid style={line width=.1pt, draw=gray!10},
            major grid style={line width=.2pt,draw=gray!50},
            axis lines=left,
            legend pos=outer north east
        ]
            \addplot[
                domain=0:1,
                samples=100,
                ultra thick,
                blue!30
            ]
            {max(1 - 5*(x*ln(x) + (1-x)*ln(1-x)-0.6*ln(0.6)-0.4*ln(0.4)),2*x - 5*(x*ln(x) + (1-x)*ln(1-x)-0.6*ln(0.6)-0.4*ln(0.4)))};
            \addlegendentry{$V(\mu)$}
            
            \addplot[
                domain=0:1,
                samples=100,
                thick,
                black
            ]
            {x <= 0.5 ? 1 - 5*(x*ln(x) + (1-x)*ln(1-x)-0.6*ln(0.6)-0.4*ln(0.4)): 2*x - 5*(x*ln(x) + (1-x)*ln(1-x)-0.6*ln(0.6)-0.4*ln(0.4))};
            \addlegendentry{$v(\mu)$}
            
           \draw [dashed, black] (axis cs:0.6, -3) -- (axis cs:0.6, 1.2);
            \draw [dashed, black] (axis cs:0.6, 1.2) -- (axis cs:0, 1.2);

    \node [blue!40, right] at (axis cs:0.6, -0.8) {\tiny No gain from };
    \node [blue!40, right] at (axis cs:0.6, -1.1) {\tiny segmentation };
        \end{axis}
    \end{tikzpicture}
    \caption{Monopolist' gain from segmentation with different $k$}
    \label{fig:concavification}
\end{figure}

Let $V(\mu)$ be the smallest concave function over net utility functions $v(\mu)$. As seen in Figure 2, when $k=0$, optimal segments occur at two extreme points which correspond to perfect price discrimination. On the other hand, when the cost is sufficiently high ($k=5$ case), there is no gain from segmentation because $V(\mu)$ and $v(\mu)$ overlap. For a moderate level of cost ($k=0.8$ case), the optimal segments occur at the two points where $V(\mu)$ is tangent to $v(\mu)$. Here, we can gain important insights about optimal segments. Suppose the optimal segments in the ($k=0.8$ case) are called $\mu^{1}$ at the left tangency point and $\mu^{2}$ at the right tangency point. One can easily see that any $\mu^{*}$ between $\mu^{1}$ and $\mu^{2}$ would give the same optimal segments. This is called as  local invariance in the literature. If your aggregate market $\mu^{*}$ fits in a region where there is a gain from segmentation, then the optimal segments are fixed. The only thing that changes is the fraction of each segment, and Bayes plausibility conditions check the changes in fractions of segments.

\section{Welfare Analysis} 
Understanding how consumers are affected by privacy regulations in the market is highly important for policymakers. In the following Proposition and in Figure 3, I show that restricting information access can hurt consumers. In other words, in some cases, when the monopolist has easy access to consumer information, it benefits consumers. A similar situation occurs when Amazon has your student information and charges you affordable prices.
\label{sec:analysis}

\begin{proposition}
\label{prop:3}
When \( |\Omega|=2 \), the consumer surplus (CS) always increases with the cost level when the initial market distribution \( \mu^{*} \) belongs to \( \mathcal{M}_{1} \). However, when \( \mu^{*} \) is in \( \mathcal{M}_{2} \), the CS exhibits a non-monotonic structure with increasing cost level.
\end{proposition}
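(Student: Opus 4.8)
The plan is to reduce the binary problem to a single scalar, solve explicitly for the optimal segments using Proposition~1, and then track consumer surplus as a function of $k$ separately in the two regions. Writing $x:=\mu_2$ for the high-type share, I would first record that a segment contributes to $CS$ only when its optimal price is the low price $\omega_1$: if $p^{*}=\omega_1$ everyone buys and the surplus is $(\omega_2-\omega_1)x$ per unit mass, whereas if $p^{*}=\omega_2$ only high types buy, and at exactly their valuation, giving zero surplus. Writing the (at most two, by Lemma~1) optimal segments as $\mu^{1}=(1-a,a)$ priced at $\omega_1$ and $\mu^{2}=(1-b,b)$ priced at $\omega_2$, with masses $\lambda_1,\lambda_2$, this gives $CS=(\omega_2-\omega_1)\lambda_1 a$, i.e. $(\omega_2-\omega_1)$ times the mass of high types served at the low price.

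Next I would solve for the segments. With $s(\omega_1,\omega_1)=\omega_1$, $s(\omega_2,\omega_1)=0$, $s(\omega_1,\omega_2)=\omega_1$, $s(\omega_2,\omega_2)=\omega_2$, the invariant-likelihood-ratio equations of Proposition~1 become $\tfrac{1-a}{e^{\omega_1/k}}=1-b$ and $\tfrac{a}{e^{\omega_1/k}}=\tfrac{b}{e^{\omega_2/k}}$, giving the closed forms $a(k)=\tfrac{e^{\omega_1/k}-1}{e^{\omega_2/k}-1}$ and $b(k)=a(k)\,e^{(\omega_2-\omega_1)/k}$. Two features matter: $a,b$ depend only on $k$ (local invariance), and strict convexity of $t\mapsto e^{t}-1$ yields $a(k)<\omega_1/\omega_2<b(k)$ for every $k>0$, with $a$ strictly increasing, $b$ strictly decreasing, both $\to\omega_1/\omega_2$ as $k\to\infty$ and $\to 0,1$ as $k\to 0$. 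Bayes plausibility then gives $\lambda_1=\tfrac{b-\mu_2^{*}}{b-a}$, which is admissible exactly when $a(k)\le\mu_2^{*}\le b(k)$; this defines a threshold $\bar k(\mu^{*})$ beyond which segmentation ceases.

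The two cases then diverge. For $\mu^{*}\in\mathcal{M}_2$ (so $\mu_2^{*}>\omega_1/\omega_2$) the binding constraint is $\mu_2^{*}\le b(k)$: for $k\ge\bar k$ the monopolist charges $\omega_2$ without segmenting, giving $CS=0$, while $k\to 0$ gives perfect discrimination and again $CS=0$; on the interior both $a>0$ and $\lambda_1=\tfrac{b-\mu_2^{*}}{b-a}>0$, so $CS>0$. Vanishing at both ends and strictly positive in between forces non-monotonicity, so this direction needs no delicate estimate. For $\mu^{*}\in\mathcal{M}_1$ (so $\mu_2^{*}<\omega_1/\omega_2$) the binding constraint is instead $a(k)\le\mu_2^{*}$; at $\bar k$ one has $a=\mu_2^{*}$, $\lambda_1=1$, $CS=(\omega_2-\omega_1)\mu_2^{*}$ (its maximal value), constant thereafter, while $CS\to 0$ as $k\to 0$. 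The claim reduces to $\tfrac{d}{dk}(a\lambda_1)>0$ on $(0,\bar k)$.

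The main obstacle is exactly this monotonicity. Substituting $\lambda_1=\tfrac{b-\mu_2^{*}}{b-a}$ and differentiating, the numerator (after clearing the positive factor $(b-a)^{-2}$) is $b(b-\mu_2^{*})\,a'(k)+a(\mu_2^{*}-a)\,b'(k)$; since $a'>0$ and $b'<0$ while $a<\mu_2^{*}<b$ in $\mathcal{M}_1$, the two terms have opposite signs and positivity is not automatic. I would establish it by inserting the closed forms for $a'$ and $b'$, which factor into signs of expressions of the type $\omega_2 e^{\omega_1/k}-\omega_1 e^{\omega_2/k}$, and then invoke that $h(k):=\omega_2 e^{\omega_1/k}-\omega_1 e^{\omega_2/k}$ is strictly increasing with limit $\omega_2-\omega_1$ (from $\omega_2>\omega_1$ and convexity of the exponential). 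This fixes the sign of each derivative and, after modest algebra, of the whole numerator. That bookkeeping is the only genuinely technical step; everything else is either a direct application of Proposition~1 or the monotonicity and limit facts about $a(k)$ and $b(k)$.
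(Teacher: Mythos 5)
Your setup is correct and in fact coincides with the paper's own proof: the closed forms $a(k)=\frac{e^{\omega_1/k}-1}{e^{\omega_2/k}-1}$, $b(k)=a(k)e^{(\omega_2-\omega_1)/k}$ are exactly the paper's $\mu^1_2,\mu^2_2$, your $CS=(\omega_2-\omega_1)\lambda_1 a$ with $\lambda_1=\frac{b-\mu_2^*}{b-a}$ agrees with the paper's expression, and your endpoint argument for non-monotonicity in $\mathcal{M}_2$ (CS vanishes at $k\to 0$ and at $k\ge\bar k$, positive in between, continuous) is the paper's argument verbatim. The reduction of the $\mathcal{M}_1$ claim to the positivity of the numerator $N=b(b-\mu_2^*)a'+a(\mu_2^*-a)b'$ is also correct algebra.

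The gap is precisely the step you defer to ``modest algebra.'' Establishing $N>0$ is the entire mathematical content of the $\mathcal{M}_1$ direction, and the tool you name cannot deliver it. Knowing the signs of $a'$ and $b'$ is, as you yourself note, insufficient because the two terms of $N$ oppose each other. Worse, the specific fact you invoke — that $h(k)=\omega_2e^{\omega_1/k}-\omega_1e^{\omega_2/k}$ is increasing with limit $\omega_2-\omega_1$ — cannot be the key, because $h$ \emph{changes sign} at $k^*=\frac{\omega_2-\omega_1}{\ln\omega_2-\ln\omega_1}$, and this $k^*$ lies inside the segmentation interval $(0,\bar k)$ whenever $\mu_2^*$ is close to the boundary $\omega_1/\omega_2$ (there $\bar k\to\infty$). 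So any argument that ties the sign of $N$ to the sign of an expression ``of the type'' $h(k)$ would yield a derivative that flips sign on $(0,\bar k)$, i.e., it would \emph{contradict} the monotonicity you are trying to prove. (Also, as a side point, the sign of $a'$ is governed by $(\omega_2-\omega_1)e^{(\omega_1+\omega_2)/k}-\omega_2e^{\omega_2/k}+\omega_1e^{\omega_1/k}$, via monotonicity of $t\mapsto\frac{te^t}{e^t-1}$, not by $h$.) What the paper does instead is avoid the quotient-rule computation entirely: with $A=e^{\omega_2/k}$, $B=e^{(\omega_2-\omega_1)/k}$ it rewrites $CS=(\omega_2-\omega_1)\left[\frac{1-\mu_2^*}{B-1}-\frac{1}{A-1}\right]$, so the two terms decouple in $A$ and $B$; since $N$ (equivalently $\partial CS/\partial k$) is decreasing in $\mu_2^*$, it suffices to treat the worst case $\mu_2^*=\omega_1/\omega_2$, i.e. $1-\mu_2^*\ge\frac{\ln B}{\ln A}$, after which positivity reduces to the single-variable lemma that $f(x)=\frac{x\ln^2 x}{(x-1)^2}$ is strictly decreasing on $(1,\infty)$, applied to $B<A$. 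Your proof needs this lemma (or an equivalent quantitative estimate) to close; without it, the $\mathcal{M}_1$ half of the proposition remains unproved.
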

Proposition 4 addresses how social welfare changes, and as seen in Figure 4, the non-monotonicity result holds in $\mathcal{M}_1$ this time. When the cost level is extremely low or high, we observe the highest social welfare. However, for moderate levels, there will be some efficiency loss due to some low types being placed in the high segment.
\begin{figure}
    \centering
    \includegraphics[scale=0.7]{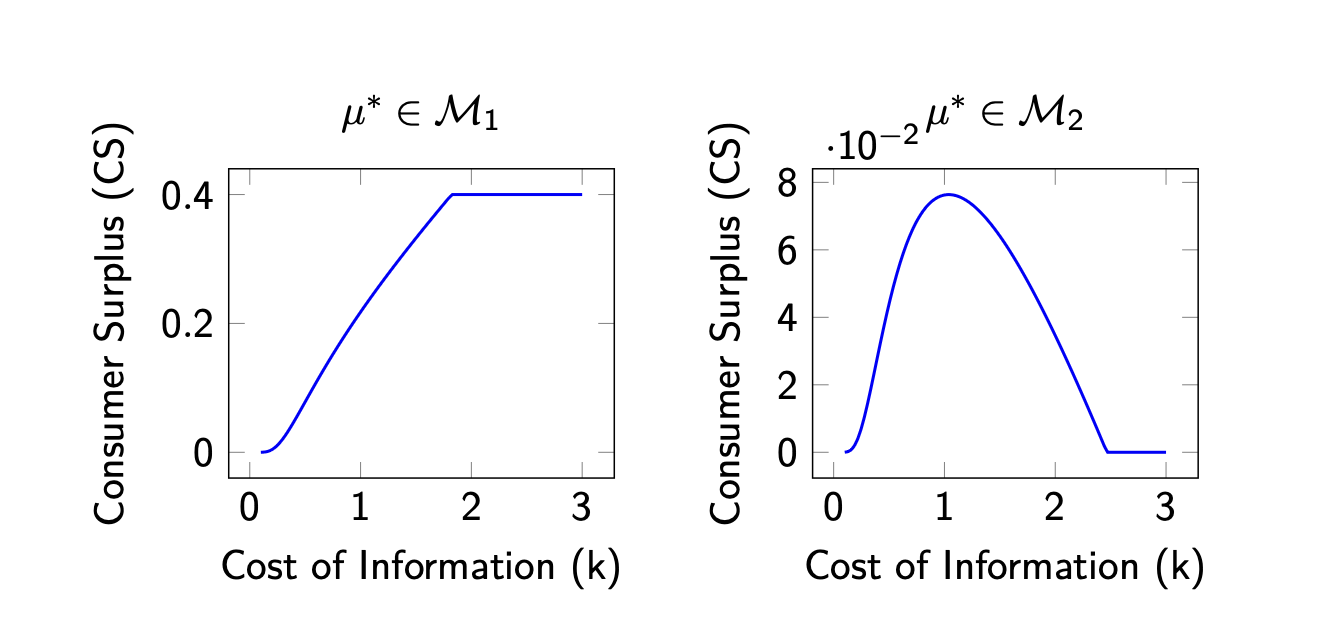}
    \label{fig:CS_Analysis}
    \caption{Consumer Surplus vs Cost of Information for different market distributions}
\end{figure}
\newpage 
\begin{proposition}
\label{prop:4}
When \( |\Omega|=2 \), the total surplus (TS) always decreases with the cost level when the initial market distribution \( \mu^{*} \) belongs to \( \mathcal{M}_{2} \). However, when \( \mu^{*} \) is in \( \mathcal{M}_{1} \), the TS exhibits non-monotonic structure with increasing cost level.
\end{proposition}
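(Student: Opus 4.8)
The plan is to reduce the statement to the behaviour of allocative (trade) efficiency. By the decomposition $TS = PS + CS$ together with the identity $s(p,\omega) + b(p,\omega) = \omega\,\mathbf{1}\{\omega \ge p\}$, the total surplus equals the realized match value $\mathbb{E}_{\tau^*}\big[\sum_k \omega_k \mathbf{1}\{\omega_k \ge p^*(\mu^s)\}\,\mu^s_k\big]$, net of the information-acquisition cost (which vanishes both at $k=0$ and once segmentation stops). In the binary case the only inefficiency is a type-$\omega_1$ consumer placed in the high-price segment, who then does not trade. First I would import the closed-form segments from the Rational Inattention conditions: solving the two invariant-likelihood-ratio equations with $\mu_1^j+\mu_2^j=1$ gives compositions $\mu_2^1=a(k)$ and $\mu_2^2=b(k)$ depending on $k$ alone (local invariance), with $a(k)=e^{(\omega_1-\omega_2)/k}b(k)$ and $b(k)=(e^{\omega_1/k}-1)/(e^{\omega_1/k}-e^{(\omega_1-\omega_2)/k})$; Bayes plausibility pins down the weight $\lambda(k)=\tau^*(\mu^1)=(b-\mu_2^*)/(b-a)$. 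The deadweight loss is then $DWL(k)=(1-\lambda(k))(1-b(k))\,\omega_1$, so $TS=(1-\mu_2^*)\omega_1+\mu_2^*\omega_2-DWL(k)$ minus the information cost.

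Next I would isolate the two cost regimes. Since $a(k)$ rises from $0$ and $b(k)$ falls from $1$ as $k$ grows, both converging to $\omega_1/\omega_2$, the two-segment solution is valid only while $\mu_2^*\in[a(k),b(k)]$; this fails past a threshold $\bar k$, beyond which the monopolist reverts to uniform pricing. In $\mathcal{M}_1$ (where $\mu_2^*\le\omega_1/\omega_2$) the binding side is $a(\bar k)=\mu_2^*$, forcing $\lambda\to1$; in $\mathcal{M}_2$ (where $\mu_2^*\ge\omega_1/\omega_2$) the binding side is $b(\bar k)=\mu_2^*$, forcing $\lambda\to0$.

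For $\mathcal{M}_1$ I would read off non-monotonicity from the shape of $DWL$. As $k\to0$ the segments are pure (perfect price discrimination, $b\to1$, $a\to0$), so $DWL\to0$ and $TS$ hits the first best; at $k=\bar k$ we have $\lambda\to1$, so $(1-\lambda)\to0$ and $DWL\to0$, while for every $k>\bar k$ there is no segmentation and all consumers trade at $\omega_1$, so $TS$ again equals the first best. For every interior $k\in(0,\bar k)$ one has $1-\lambda(k)>0$ and $1-b(k)>0$, hence $DWL(k)>0$ and $TS$ lies strictly below the first best. By continuity $DWL$ attains a strictly positive interior maximum, so $TS$ first decreases and then increases, which is the claimed non-monotonicity; the information-cost term only reinforces this, being zero at $k=0$ and for $k\ge\bar k$ and strictly positive in between.

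For $\mathcal{M}_2$ the claim is monotonicity, which reduces to showing $DWL(k)$ is increasing on $[0,\bar k]$ (and then constant at its maximal value $(1-\mu_2^*)\omega_1$, where only high types trade). Writing $DWL(k)/\omega_1=\frac{(\mu_2^*-a)(1-b)}{b-a}$ and substituting the closed forms, I would change variables to $t=1/k$ and sign the derivative; the intuition is that as $k$ rises the high-price segment both gains weight and admits a larger share of low types, so the excluded low-type mass grows. This monotonicity computation is the main obstacle: the expression is transcendental in $k$ through $e^{\omega_1/k}$ and $e^{(\omega_1-\omega_2)/k}$, and its three factors move in opposing directions, so no factor-by-factor argument suffices. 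I would attack it by establishing monotonicity/log-convexity of the factors in $t$ and, if a clean global sign is elusive, by bounding $DWL$ plus the information cost below its endpoint value on $(0,\bar k)$ — verifying that the efficiency loss dominates the vanishing information-cost bump near $\bar k$ — which secures the weak monotonicity the statement asserts.
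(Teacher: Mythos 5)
Your overall route coincides with the paper's: reduce $TS$ to allocative efficiency (first best minus the deadweight loss from low types placed in the high-price segment), use the endpoint-plus-continuity argument for non-monotonicity in $\mathcal{M}_1$, and prove monotonicity in $\mathcal{M}_2$ by an explicit computation. Your $\mathcal{M}_1$ half is correct and even more explicit than the paper's (though continuity gives you non-monotonicity, not literally ``first decreases and then increases''). The genuine gap is the $\mathcal{M}_2$ half, which is the substantive half of the proposition: you write down $DWL(k)/\omega_1=\frac{(\mu_2^*-a)(1-b)}{b-a}$, declare its monotonicity ``the main obstacle,'' and offer as a fallback ``bounding $DWL$ plus the information cost below its endpoint value on $(0,\bar k)$.'' That fallback does not prove the statement: showing $TS(k)\ge TS(\bar k)$ for all $k\in(0,\bar k)$ is compatible with $TS$ rising and falling many times on that interval, whereas the proposition asserts $TS$ \emph{decreases} with the cost level, i.e.\ a comparison between every pair $k<k'$. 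So as written the proposal leaves the claimed monotonicity unestablished.

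The gap is fillable, and with exactly the technique the paper points to when it says the result ``follows the similar logic'' of Proposition 3; the expression you feared was intractable in fact collapses. Using your own closed forms, $\frac{1-b}{b-a}=\frac{1}{e^{\omega_1/k}-1}$ and $\frac{a}{e^{\omega_1/k}-1}=\frac{1}{e^{\omega_2/k}-1}$, so
\[
\frac{DWL(k)}{\omega_1}=\frac{(\mu_2^*-a)(1-b)}{b-a}=\frac{\mu_2^*}{e^{\omega_1/k}-1}-\frac{1}{e^{\omega_2/k}-1},
\]
and hence
\[
\frac{d}{dk}\,\frac{DWL(k)}{\omega_1}=\frac{\mu_2^*\,\omega_1 e^{\omega_1/k}}{k^2\bigl(e^{\omega_1/k}-1\bigr)^2}-\frac{\omega_2 e^{\omega_2/k}}{k^2\bigl(e^{\omega_2/k}-1\bigr)^2}.
\]
Writing $A_i=e^{\omega_i/k}$, substituting $\omega_i=k\ln A_i$, and using $\mu_2^*\ge \omega_1/\omega_2=\ln A_1/\ln A_2$ (the definition of $\mathcal{M}_2$), positivity of this derivative reduces to $f(A_1)\ge f(A_2)$ for $f(x)=x\ln^2(x)/(x-1)^2$, which is decreasing on $x>1$ with $1<A_1<A_2$ --- precisely the function and monotonicity fact invoked in the paper's proof of Proposition 3. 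This gives $DWL$ strictly increasing on $(0,\bar k)$ and constant at $(1-\mu_2^*)\omega_1$ thereafter, hence $TS$ weakly decreasing, as claimed. Finally, note that the paper's welfare analysis excludes the information cost from $PS$ (see the footnote in the proof of Proposition 5), so $TS$ is exactly first best minus $DWL$; your worry about the ``vanishing information-cost bump near $\bar k$'' therefore does not arise, and under the cost-inclusive reading it would anyway require an additional argument, not a weaker one.
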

\begin{figure}
    \centering
    \includegraphics[scale=0.7]{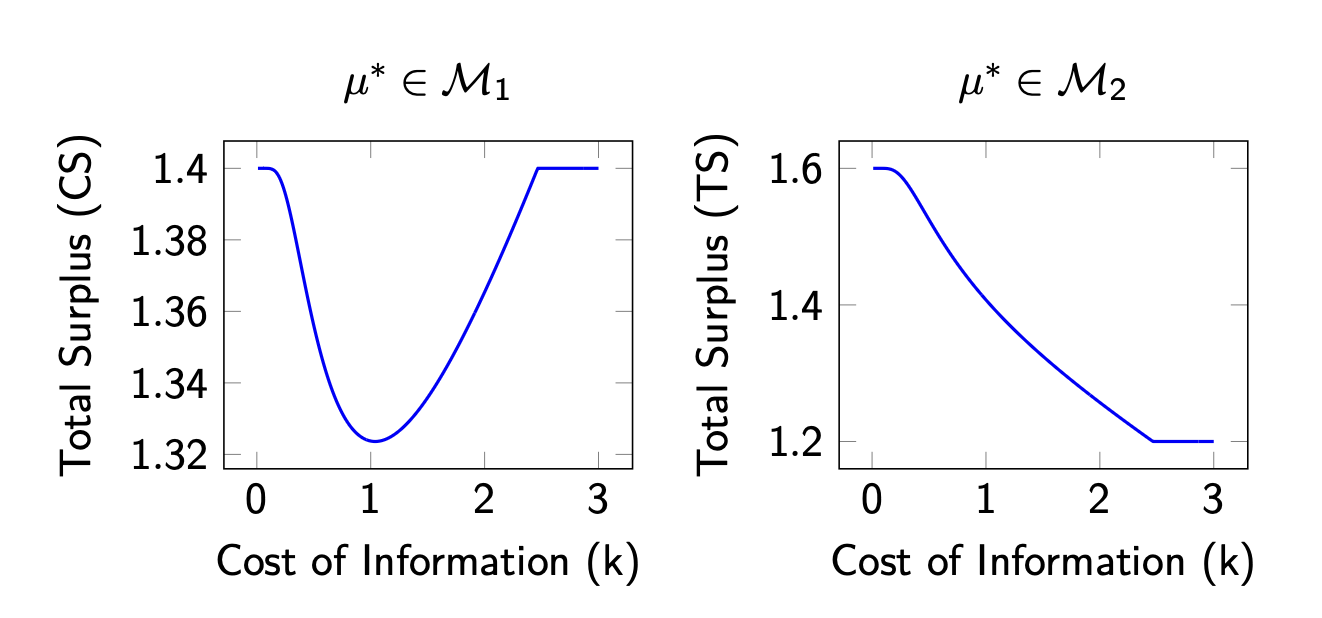}
    \label{fig:CS_Analysis}
    \caption{Total Surplus vs Cost of Information for different market distributions}
\end{figure}
\section{Rationalization}
\label{sec:rationalization}
So far, I have assumed that the monopolist has knowledge about the cost of information acquisition in the market.
Based on that assumption, I solved the monopolist's problem and conducted an analysis regarding the welfare implications of that cost on the market. My analysis primarily relied on using Shannon entropy as the cost function. However, it would be beneficial to explore more general cost functions while preserving some important structural characteristics such as strict convexity and posterior-separability, rather than focusing solely on Shannon entropy.

\cite{BBM} introduced the concept of a \textit{surplus triangle}, which represents the combination of all (CS, PS) pairs that can be achieved via some segmentation in the market. The surplus triangle introduced by \cite{BBM} does not account for cost of market segmentation. When the cost is introduced to the model, a specific cost function and fixed cost parameter $k$ determine a single point within that surplus triangle. As k changes, it creates a curved line inside the surplus triangle. The selection of different cost functions results in distinct points within the surplus triangle. 
\begin{figure}[h]
    \centering
    \begin{subfigure}[b]{0.45\textwidth} % Adjusted width to fit within the textwidth
        \centering
        \includegraphics[width=\textwidth]{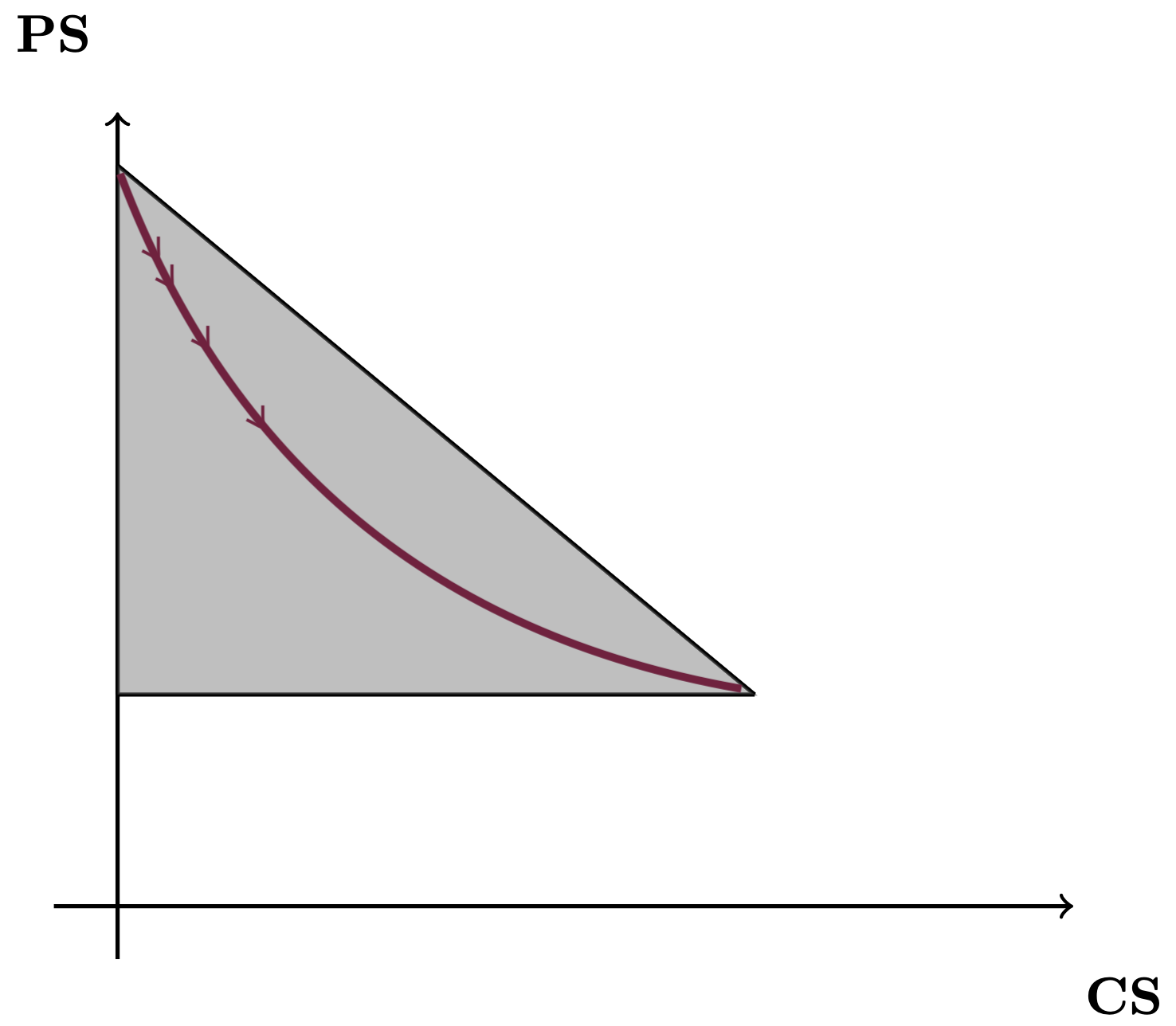}
        \caption{$\mu^{*} \in \mathcal{M}_1$}
        \label{fig:m1}
    \end{subfigure}
    \begin{subfigure}[b]{0.45\textwidth} % Adjusted width to fit within the textwidth
        \centering
        \includegraphics[width=\textwidth]{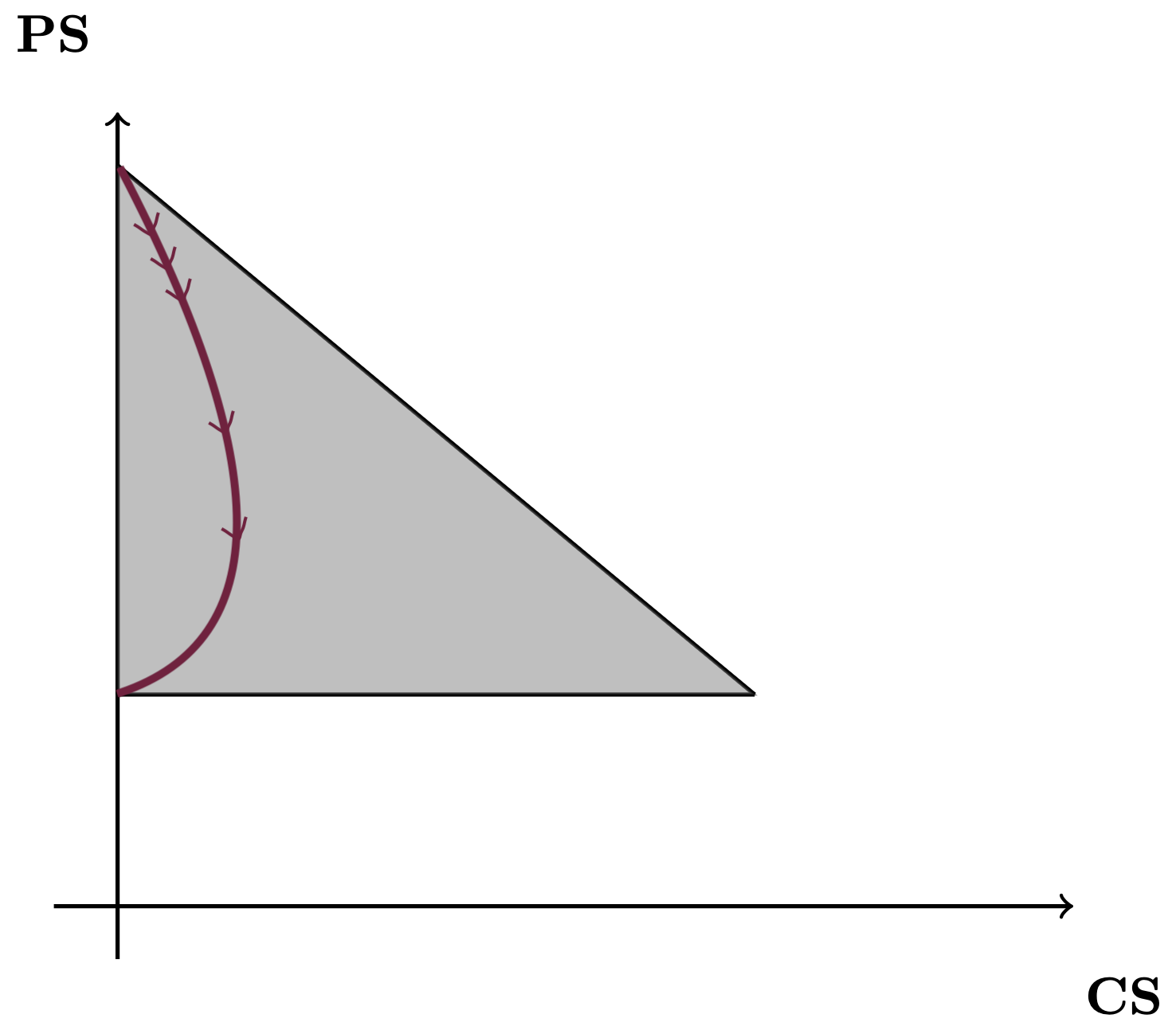}
        \caption{$\mu^{*} \in \mathcal{M}_2$}
        \label{fig:m2}
    \end{subfigure}
    \caption{The figure captures all (CS,PS) for optimal segmentation with increasing level of cost. Arrows on the curvature represent the increase on the cost level. }
    \label{fig:optimal_segmentation}
\end{figure}
The key issue is identifying the set of all attainable points within the surplus triangle resulting from the monopolist's optimization problem. This set includes all (CS, PS) pairs attainable under cost functions satisfying strict convexity and posterior-separability. I will refer to such attainable (CS, PS) pairs as rationalizable, as defined below.

\begin{definition}

A (CS, PS) pair is considered \textbf{rationalizable} if there exists a segmentation $\tau$ and a strictly convex and posterior-separable cost function $c(\cdot)$ such that the segmentation $\tau$ is optimal for the monopolist under the cost function $c(\cdot)$, and it yields the given (CS, PS) pair within the surplus triangle.
\end{definition}
\begin{proposition}
\label{prop:5}
When there are two types of consumers, any interior points of surplus triangle for a given aggregate market $\mu^*$ is rationalizable.
\end{proposition}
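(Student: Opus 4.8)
The plan is to split the definition of rationalizability into two independent design choices --- first a segmentation that pins down the target $(CS,PS)$, then a strictly convex posterior-separable cost function making exactly that segmentation optimal --- and to exploit that in the binary case both reduce to elementary one-dimensional constructions. Throughout write $q=\mu_2$ for the type-$2$ fraction of a market, $q^{*}=\mu^{*}_2$, $q^{\dagger}=\omega_1/\omega_2$ for the pricing threshold, $\Delta=\omega_2-\omega_1$, and $\pi(q)=\max\{\omega_1,\omega_2 q\}$ for gross profit, so that $\pi^{*}=\pi(q^{*})$, $w^{*}=\omega_1+q^{*}\Delta$, and the triangle is $\{CS\ge 0,\ PS\ge \pi^{*},\ CS+PS\le w^{*}\}$. \emph{Step 1 (realizing the point).} I consider segmentations with one low-price segment (type-$2$ fraction $a<q^{\dagger}$, price $\omega_1$, mass $m$) and one high-price segment (fraction $b\ge q^{\dagger}$, price $\omega_2$, mass $1-m$), tied by Bayes plausibility $ma+(1-m)b=q^{*}$. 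Writing $t=ma$, the induced gross surplus is $CS=t\Delta$ and $PS=m\omega_1+\omega_2(q^{*}-t)$, which is affine in $(t,m)$. A direct substitution shows the feasibility constraints on $(a,b,m)$ cut out exactly the three edges $CS=0$, $PS=\pi^{*}$, $CS+PS=w^{*}$, so this affine map carries the feasible region bijectively onto the triangle; hence every interior point determines a unique $(t,m)$, and thus a unique segmentation $\tau$ with strict inequalities $0<a<q^{\dagger}<b<1$, $m\in(0,1)$, and $a<q^{*}<b$.

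\emph{Step 2 (a cost function making $\tau$ optimal).} A strictly convex posterior-separable cost has the form $c(\tau)=\mathbb{E}_{\tau}[\psi(\mu^{s})]-\psi(\mu^{*})$ for a strictly convex generating function $\psi$, so by the concavification logic of Section 3 the monopolist's problem reduces to concavifying $g(q)=\pi(q)-\psi(q)$ and splitting $q^{*}$ onto the contact set of $g$ with its concave envelope. I therefore want $\psi$ strictly convex with $\psi\ge\pi$ on $[0,1]$ and $\psi=\pi$ exactly at $a$ and $b$. The key observation is that it suffices to make $\psi$ tangent to the two affine pieces of $\pi$, i.e.\ to impose $\psi(a)=\omega_1,\ \psi'(a)=0$ and $\psi(b)=\omega_2 b,\ \psi'(b)=\omega_2$: strict convexity then forces $\psi(q)>\omega_1$ for $q\ne a$ and $\psi(q)>\omega_2 q$ for $q\ne b$, hence $\psi\ge\max\{\omega_1,\omega_2 q\}=\pi$ globally with equality only on $\{a,b\}$. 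Global domination is thus automatic rather than something to arrange piecewise. Such a $\psi$ exists; for instance $\psi(q)=\omega_1+\int_a^q\phi$ for a smooth strictly increasing $\phi$ with $\phi(a)=0$, $\phi(b)=\omega_2$ and $\int_a^b\phi=\omega_2 b-\omega_1$, the last value lying in $(0,\omega_2(b-a))$ precisely because $a<q^{\dagger}<b$.

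\emph{Step 3 (optimality, uniqueness, conclusion).} Since $g=\pi-\psi\le 0$ with equality only on $\{a,b\}$, the constant $0$ is a concave majorant, so the envelope obeys $\widehat g\le 0$ with $\widehat g(a)=\widehat g(b)=0$; concavity then forces $\widehat g\equiv 0$ on $[a,b]$, and because $q^{*}\in(a,b)$ the optimal value $\widehat g(q^{*})=0$ is attained only by a split supported on the global contact set $\{a,b\}$. That split is uniquely $\tau$, and it strictly beats no segmentation, which earns $g(q^{*})<0$. Hence $\tau$ is the unique monopolist optimum under $c$ and yields the prescribed $(CS,PS)$, so the point is rationalizable.

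The step I expect to carry the real weight is Step 2 together with the uniqueness argument in Step 3: everything hinges on choosing $\psi$ so that its contact set with the gross-profit function is exactly the two prescribed segments and nothing more, since any extra contact point would spawn an alternative optimal segmentation and break the clean correspondence. The feature that makes this tractable is that tangency to the two affine pieces of $\pi$ plus strict convexity delivers both global domination and a two-point contact set for free; the main things to check carefully are that the contact set is not enlarged at the kink $q^{\dagger}$ (it is not, since $a<q^{\dagger}<b$ yields $\psi(q^{\dagger})>\omega_1=\pi(q^{\dagger})$) and that the degenerate configurations excluded by interiority are precisely those lying on the edges of the triangle.
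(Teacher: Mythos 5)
Your proof is correct, and it shares the paper's high-level decomposition---first invert the map from two-segment segmentations to $(CS,PS)$ pairs to get a unique target segmentation, then manufacture a strictly convex, posterior-separable cost under which that segmentation is optimal---but your second step takes a genuinely different route. The paper works through the Lagrangian of the monopolist's problem: it normalizes $c(\mu^{*1})=c(\mu^{*2})$, derives first-order conditions $c'(\mu^{*2})=\omega_2$ and a formula for $c'(\mu^{*1})$, adds a separate net-gain-from-segmentation inequality (its condition (3)), and then asserts that some strictly convex function satisfies these local conditions. Strictly speaking, that argument only exhibits the target segmentation as a critical point satisfying a participation constraint; global optimality and uniqueness are claimed but not verified. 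Your construction---choosing the generator $\psi$ tangent to both affine branches of $\pi$, so that $g=\pi-\psi\le 0$ with contact set exactly $\{a,b\}$, and then concavifying---gets global optimality and uniqueness of the optimum for free, subsumes the paper's net-gain condition (since $g(q^{*})<0$ makes no-segmentation strictly worse), and renders the paper's consumer-preferred tie-breaking assumption moot. The two normalizations are in fact equivalent up to adding an affine function to the cost, which is innocuous for posterior-separable costs because Bayes plausibility washes affine terms out of $\mathbb{E}_{\tau}[\psi(\mu^{s})]$; your tangency conditions $\psi'(a)=0$, $\psi'(b)=\omega_2$ are the paper's FOCs after that shift. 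A further small gain on your side: your Step 1 treats $\mu^{*}\in\mathcal{M}_1$ and $\mu^{*}\in\mathcal{M}_2$ uniformly (note only that the edge $PS=\pi^{*}$ corresponds to the binding constraint $b=q^{\dagger}$ in the first case but $a=q^{\dagger}$ in the second), whereas the paper restricts attention without loss of generality to $\mathcal{M}_1$. What the paper's Lagrangian route buys is continuity with the machinery of Proposition 1; what yours buys is exactly the rigor the paper waves at when it says the conditions ``are only local conditions that should be satisfied by some convex function.''
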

\begin{proof}
Detailed proof is in Appendix.
\end{proof}

\section{Conclusion} \label{sec:conclusion}
In this paper, I analyzed third-degree price discrimination under costly information acquisition. By using expected reduction of entropy as a cost function for monoplist, I provided a solution for optimal market segmentation. I also conducted welfare analysis by showing that consumer surplus has a non-monotonic structure with the cost of information. My results are mostly based on the binary case, but further research can enhance them for general cases.

\setlength\bibsep{10pt} % Adjust the length as needed
\bibliographystyle{apalike}
% \bibliography{references}

\begin{thebibliography}{}

\bibitem[Aumann and Maschler, 1995]{aumann1995repeated}
Aumann, R.~J. and Maschler, M.~B. (1995).
\newblock {\em Repeated Games with Incomplete Information}.
\newblock MIT Press, Cambridge, MA.

\bibitem[Bergemann et~al., 2015]{BBM}
Bergemann, D., Brooks, B., and Morris, S. (2015).
\newblock The limits of price discrimination.
\newblock {\em American Economic Review}, 105(3):921--957.

\bibitem[Bergemann et~al., 2022]{bergemann2022third}
Bergemann, D., Castro, F., and Weintraub, G. (2022).
\newblock Third-degree price discrimination versus uniform pricing.
\newblock {\em Journal of Economics and Management Strategy}, 31(1):39--72.

\bibitem[Caplin and Dean, 2013]{caplin_dean}
Caplin, A. and Dean, M. (2013).
\newblock Behavioral implications of rational inattention with shannon entropy.
\newblock Working Paper 19318, National Bureau of Economic Research.

\bibitem[Caplin et~al., 2019]{caplin2019rational}
Caplin, A., Dean, M., and Leahy, J. (2019).
\newblock Rational inattention, optimal consideration sets, and stochastic
  choice.
\newblock {\em Review of Economic Studies}, 86(3):1061--1094.

\bibitem[de~Cornière et~al., 2023]{de2023third}
de~Cornière, A., Mantovani, A., and Shekhar, S. (2023).
\newblock Third-degree price discrimination in two-sided markets.
\newblock {\em Toulouse School of Economics Working Paper}.

\bibitem[Eilat and Mu, 2021]{eilat2021bayesian}
Eilat, R. and Mu, X. (2021).
\newblock Bayesian privacy.
\newblock {\em Theoretical Economics}, 16(4):1123--1150.

\bibitem[Gentzkow and Kamenica, 2014]{Costly_Persuasion}
Gentzkow, M. and Kamenica, E. (2014).
\newblock Costly persuasion.
\newblock {\em American Economic Review}, 104(5):457--62.

\bibitem[Kamenica and Gentzkow, 2011]{Bayesian_Persuasion}
Kamenica, E. and Gentzkow, M. (2011).
\newblock Bayesian persuasion.
\newblock {\em American Economic Review}, 101(6):2590--2615.

\bibitem[Leeson and Sobel, 2008]{leeson2008cost}
Leeson, P.~T. and Sobel, R.~S. (2008).
\newblock Costly price discrimination and social welfare.
\newblock {\em Journal of Economic Behavior \& Organization}, 67(3-4):583--589.

\bibitem[Lu and Sheng, 2019]{lu2019endogenous}
Lu, J. and Sheng, L. (2019).
\newblock Endogenous third-degree price discrimination in hotelling model with
  elastic demand.
\newblock {\em Journal of Economics}, 127(2):153--171.

\bibitem[Matyskov{\'a} and Montes, 2021]{matyskova2021bayesian}
Matyskov{\'a}, L. and Montes, A. (2021).
\newblock Bayesian persuasion with costly information acquisition.
\newblock Technical Report 296, Rheinische Friedrich-Wilhelms-Universit{\"a}t
  Bonn - Universit{\"a}t Mannheim.

\bibitem[Matějka and McKay, 2015]{matejka}
Matějka, F. and McKay, A. (2015).
\newblock Rational inattention to discrete choices: A new foundation for the
  multinomial logit model.
\newblock {\em American Economic Review}, 105(1):272--98.

\bibitem[Pigou, 1920]{pigou1920economics}
Pigou, A.~C. (1920).
\newblock {\em The Economics of Welfare}.
\newblock Macmillan and Co.

\bibitem[Robinson, 1933]{robinson1933economics}
Robinson, J. (1933).
\newblock {\em The Economics of Imperfect Competition}.
\newblock Macmillan and Co., London.
\newblock First edition, first impression.

\bibitem[Shannon, 1948]{shannon}
Shannon, C.~E. (1948).
\newblock A mathematical theory of communication.
\newblock {\em The Bell System Technical Journal}, 27:379--423.

\bibitem[Thereze, 2022]{thereze2022screening}
Thereze, J. (2022).
\newblock Screening costly information.
\newblock Technical report, mimeo, Princeton.

\bibitem[Varian, 1989]{varian1989price}
Varian, H.~R. (1989).
\newblock Price discrimination.
\newblock In Schmalensee, R. and Willig, R.~D., editors, {\em Handbook of
  Industrial Organization}, volume~1, pages 597--654. Elsevier.

\end{thebibliography}

\clearpage

% Return to normal line spacing
\setlength{\bibsep}{0pt} % Reset bibsep to default
\onehalfspacing % or \doublespacing, \spacing, etc. depending on your needs

\clearpage

\section{Appendix} \label{sec:appendixa}
\addcontentsline{toc}{section}{Appendix A}

\subsection{Proof of Lemmas}
\begin{lemmap}
Towards a contradiction, let $\tau^*$ be the optimal segmentation and assume that $|supp(\tau^*)| > |\Omega| = K$. By the Pigeonhole Principle, there must be at least two segments in the support of $\tau^*$ that belong to the same price region. Without loss of generality, suppose $\mu^i$ and $\mu^j \in supp(\tau^*)$ are both in $\mathcal{M}_1$.

Since $\tau^*$ is the optimal segmentation, the expected profit for the monopolist using $\tau^*$ is maximized. Let $V(\tau^*)$ denote the expected profit under $\tau^*$.

Now, consider a new segmentation $\tau'$, which is the same as $\tau^*$ except that it combines the two segments $\mu^i$ and $\mu^j$ into a single segment. Let $\mu^{ij}$ be the combined segment.

We need to show that the expected profit under $\tau'$ is strictly higher than the expected profit under $\tau^*$. Since $\mu^i$ and $\mu^j$ are in the same $\mathcal{M}_1$: their contributions to the profit function without cost are linear, and in the same price region, the combined segment would give the same expected utility. However, since the cost function is convex, Jensen's inequality proves that the cost of the combined segment is less than the expected cost of two segments. Therefore, the overall new segmentation $\tau'$ will give strictly higher expected profit to the monopolist, which contradicts the optimality of the initial segmentation. Thus, we conclude that $|supp(\tau^*)| \leq |\Omega| = K$.
\end{lemmap}

\subsection{Proof of Propositions}

\begin{proofp}

The net utility \( v(\mu^j) \) is given by:

\[
v(\mu^j) =  \sum_{i=1}^K \mu_i^j s(\omega_j, \omega_i) + k H(\mu^j)
\]

where the entropy \( H(\mu^j) \) is:

\[
H(\mu^j) = -\sum_{i=1}^K \mu_i^j \ln \mu_i^j
\]

Therefore,
\[
v(\mu^j) =  \sum_{i=1}^K \mu_i^j (s(\omega_j, \omega_i) -k \ln\mu_i^j)
\]

Lagrangian would be as follows:

\[
\mathcal{L} = \sum_j \tau(\mu^j) \left( \sum_{i=1}^K \mu_i^j (s(\omega_j, \omega_i) -k \ln\mu_i^j)\right) - \lambda \left( \sum_j \tau(\mu^j)\mu^{j} - \mu^{*} \right)
\]

Because the entropy function has a natural logarithm (ln) and its derivative goes to infinity at the boundaries, all chosen segments must have full support, i.e., for all chosen segments $\mu^{j}$, $\mu^{j}_{i} > 0$.To find the optimal solution, we take the partial derivatives of \( \mathcal{L} \) with respect to \( \mu_i^j \) and \( \lambda \), and set them to zero.

The net marginal benefit of each type should be equal to each other for each chosen segment. Let $\mu^{j}$ and $\mu^{m} \in supp(\tau^{*})$, then for any type i,

\[s(\omega_j,\omega_i) - k \ln \mu^{j}_{i} = s(\omega_m,\omega_i) - k\ln \mu^{m}_{i}\]

\[\implies k(\ln \mu^{m}_{i} - \ln \mu^{j}_{i}) = s(\omega_m,\omega_i) - s(\omega_j,\omega_i)\]

\[\implies k\left(\frac{\mu^{m}_{i}}{\mu^{j}_{i}}\right) = \exp\left(s(\omega_m,\omega_i) - s(\omega_j,\omega_i)\right)\]

\[\implies \frac{\mu_{i}^{j}}{e^{\frac{s(\omega_j,\omega_{i})}{k}}} = \frac{\mu_{i}^{m}}{e^{\frac{s(\omega_m,\omega_{i})}{k}}}\]

For the third part of the proof and extra details \cite{caplin_dean} provides detailed explanation.
 
\end{proofp}

\begin{proofp}
 
Since the monopolist is indifferent between $\omega_r$ and $\omega_s$, we can write the monopolist's profit before segmentation as:

\begin{equation*}
\omega_r \sum_{i=r}^{K}\mu^{*}_i=\omega_s \sum_{i=s}^{K}\mu^{*}_i=max_{p} \sum_{i=1}^{K} s(p,\omega_{i}) \mu^{*}_{i}
\end{equation*}

Now, assume the monopolist segments the market in the following way: she chooses two segments with equal likelihood, one in $\mathcal{M}_r$ and one in $\mathcal{M}_s$, and the segments are $\mu^{r}=(\mu^{*}_1,..,\mu^{*}_r+\epsilon,..,\mu^{*}_s-\epsilon,..\mu^{*}_K)$ and $\mu^{s}=(\mu^{*}_1,..,\mu^{*}_r-\epsilon,..,\mu^{*}_s+\epsilon,..\mu^{*}_K)$ such that $\frac{1}{2}\mu^{r}+\frac{1}{2}\mu^{s}=\mu^{*}$

After segmentation, the monopolist gains $\frac{1}{2}\omega_s \epsilon$, and the cost she pays is equal to the expected reduction in entropy, which is equal to $k(H(\mu^{*})-\frac{1}{2}(H(\mu^{r})+H(\mu^{s})))$.

We need show that the gain is always higher than the cost for sufficiently small $\epsilon$

Given that \(\mu^r\) and \(\mu^s\) only differ from \(\mu^*\) in the \(\epsilon\) perturbations, we can write:

\[
H(\mu^r) = - \left( \sum_{i \ne r,s} \mu_i^* \ln \mu_i^* + (\mu_r^* + \epsilon) \ln (\mu_r^* + \epsilon) + (\mu_s^* - \epsilon) \ln(\mu_s^* - \epsilon) \right)
\]
\[
H(\mu^s) = - \left( \sum_{i \ne r,s} \mu_i^* \ln \mu_i^* + (\mu_r^* - \epsilon) \ln (\mu_r^* - \epsilon) + (\mu_s^* + \epsilon) \ln (\mu_s^* + \epsilon) \right)
\]

So the cost for segmentation is equal to:
\begin{align*}
& \frac{k}{2} \left( - \mu_r^* \ln \mu_r^* - \mu_s^* \ln \mu_s^*  + (\mu_r^* + \epsilon) \ln (\mu_r^* + \epsilon) + (\mu_r^* - \epsilon) \ln (\mu_r^* - \epsilon) + \right.\\
& \left(\mu_s^* + \epsilon) 
\ln(\mu_s^* + \epsilon) + (\mu_s^* - \epsilon) \ln (\mu_s^* - \epsilon) \right)
\end{align*}

We want to prove that for sufficiently small $\epsilon$,
\begin{align*}
&\frac{k}{2} \left( - \mu_r^* \ln \mu_r^* - \mu_s^* \ln \mu_s^* + (\mu_r^* + \epsilon) \ln (\mu_r^* + \epsilon) + (\mu_r^* - \epsilon) \ln (\mu_r^* - \epsilon) +\right.\\
& \left (\mu_s^* + \epsilon) \ln(\mu_s^* + \epsilon) + (\mu_s^* - \epsilon) \ln (\mu_s^* - \epsilon) \right)
\end{align*}
is always less than $\frac{1}{2}\omega_s \epsilon$. 
We start by using the Taylor expansion of the $\ln()$ function around $\mu_r^*$ and $\mu_s^*$. For small $\epsilon$, the Taylor expansions are:
\begin{align*}
\ln(\mu_r^* + \epsilon) &\approx \ln \mu_r^* + \frac{\epsilon}{\mu_r^*} - \frac{\epsilon^2}{2(\mu_r^*)^2}, \\
\ln(\mu_r^* - \epsilon) &\approx \ln \mu_r^* - \frac{\epsilon}{\mu_r^*} +\frac{\epsilon^2}{2(\mu_r^*)^2}, \\
\ln(\mu_s^* + \epsilon) &\approx \ln \mu_s^* + \frac{\epsilon}{\mu_s^*} - \frac{\epsilon^2}{2(\mu_s^*)^2}, \\
\ln(\mu_s^* - \epsilon) &\approx \ln \mu_s^* - \frac{\epsilon}{\mu_s^*} +\frac{\epsilon^2}{2(\mu_s^*)^2}.
\end{align*}

Substituting these expansions into the original expression, we get:
\begin{align*}
&\frac{k}{2} \left( - \mu_r^* \ln \mu_r^* - \mu_s^* \ln \mu_s^* + (\mu_r^* + \epsilon) \ln (\mu_r^* + \epsilon) + (\mu_r^* - \epsilon) \ln (\mu_r^* - \epsilon) +\right.\\
& \left (\mu_s^* + \epsilon) \ln(\mu_s^* + \epsilon) + (\mu_s^* - \epsilon) \ln (\mu_s^* - \epsilon) \right) \\
&= \frac{k}{2} \left( - \mu_r^* \ln \mu_r^* - \mu_s^* \ln \mu_s^* + (\mu_r^* + \epsilon) \left( \ln \mu_r^* + \frac{\epsilon}{\mu_r^*} - \frac{\epsilon^2}{2(\mu_r^*)^2} \right) + (\mu_r^* - \epsilon) \left( \ln \mu_r^* - \frac{\epsilon}{\mu_r^*} + \frac{\epsilon^2}{2(\mu_r^*)^2} \right) \right. \\
&\quad + (\mu_s^* + \epsilon) \left( \ln \mu_s^* + \frac{\epsilon}{\mu_s^*} - \frac{\epsilon^2}{2(\mu_s^*)^2} \right) + (\mu_s^* - \epsilon) \left( \ln \mu_s^* - \frac{\epsilon}{\mu_s^*} + \frac{\epsilon^2}{2(\mu_s^*)^2} \right) \Bigg) \\
&= \frac{k}{2} \left( \mu_r^* \ln \mu_r^* + \mu_s^* \ln \mu_s^* + \frac{2 \epsilon^2}{\mu_r^*} + \frac{2 \epsilon^2}{\mu_s^*} \right)
\end{align*}

Since $ln(\mu)$ is negative for $\mu<1$, we have 

\[\frac{k}{2} \left( \mu_r^* \ln \mu_r^* + \mu_s^* \ln \mu_s^* + \frac{2 \epsilon^2}{\mu_r^*} + \frac{2 \epsilon^2}{\mu_s^*} \right)<\frac{k}{2} \left(\frac{2 \epsilon^2}{\mu_r^*} + \frac{2 \epsilon^2}{\mu_s^*} \right)=\epsilon^2 \left(\frac{ k}{\mu_r^*} + \frac{k}{\mu_s^*} \right)
\]

For sufficiently small $\epsilon$,

\[
\epsilon^2 \left(\frac{ k}{\mu_r^*} + \frac{k}{\mu_s^*} \right) < \frac{1}{2}\omega_s \epsilon.
\]

Therefore, the monopolist always chooses to segment the market, regardless of the cost level, when she is indifferent between charging two prices.

\end{proofp}

\begin{proofp}
W.L.O.G, we can set the price $\omega_1=1$ as  numéraire  and we can set $\omega_2=\omega$ where $\omega>1$ because only thing matters is the ratio of two valuations.
        
If we follow Proposition 1, in the optimal segmentation:
        \begin{align*}
            \mu^1_{2} &= \displaystyle\frac{e^{\frac{1}{k}}-1}{e^{\frac{\omega}{k}}-1} \\
            \mu^2_{2} &= \displaystyle\frac{e^{\frac{\omega}{k}}-e^{\frac{\omega-1}{k}}}{e^{\frac{\omega}{k}}-1} \\
            \tau(\mu^1) &= \displaystyle\frac{e^{\frac{\omega}{k}}(1-\mu^{*})+\mu^{*}-e^{\frac{\omega-1}{k}}}{(e^{\frac{\omega-1}{k}}-1)(e^{\frac{1}{k}}-1)}
        \end{align*}
Let:
\[
A = e^{\frac{\omega}{k}}, \quad B = e^{\frac{\omega - 1}{k}}
\]
We can calculate consumer surplus (CS) as follows:
\[
CS(\omega, k, \mu^*) = \tau(\mu^1)\mu^1_{2}(\omega-1)=\frac{A(1 - \mu^*) + \mu^* - B}{(B - 1)(A- 1)}(\omega-1)
\]
and we need to show that $CS$ always increases with $k$ regardless of $\omega$ when $\mu^* \in \mathcal{M}_1$ and it is segmentable, i.e., $\frac{1}{\omega} \geq \mu^*> \mu^1_{2}= \frac{A-B}{B(A-1)}$

To show that $CS$ increases with $k$, we need to show that the partial derivative of $CS$ with respect to $k$ is always positive.

\[
CS(\omega, k, \mu^*) = \frac{A (1 - \mu^*) + \mu^* - B}{(B - 1) (A - 1)}(\omega-1)
=(\frac{1-\mu^{*}}{B-1}-\frac{1}{A-1})(\omega-1)\]
We can ignore \(\omega-1\) when taking the partial derivative of \(CS\) with respect to \(k\) because we only consider the sign of the partial derivative. Since \(\omega > 1\), that part wouldn't affect the sign of the derivative.

\[
\frac{\partial CS}{\partial k} = \frac{\partial}{\partial k} \left(\frac{1-\mu^{*}}{B-1}-\frac{1}{A-1}\right)
\]

We have the following partial derivatives:
\[
\frac{\partial A}{\partial k} = -\frac{\omega}{k^2} e^{\frac{\omega}{k}} = -\frac{\omega}{k^2} A
\]
\[
\frac{\partial B}{\partial k} = -\frac{\omega - 1}{k^2} e^{\frac{\omega - 1}{k}} = -\frac{\omega - 1}{k^2} B
\]

Since $  \frac{A-B}{B(A-1)}< \mu^* \leq \frac{1}{\omega} $, we know that $\frac{A(B-1)}{B(A-1)}>1 - \mu^* \geq 1 - \frac{1}{\omega}$. 
When we substitute the partial derivatives into the derivative of $CS$ and using $\frac{\ln(A)}{\ln(B)}=\frac{\omega}{\omega-1}$ we get:

\[\frac{\partial CS}{\partial k} = \frac{(1-\mu^{*})(\omega - 1) B}{k^2 (B-1)^2} - \frac{\omega A}{k^2 (A-1)^2} > 0 
\]
\begin{align*}
\iff \frac{(1-\mu^{*})(\omega - 1) B}{k^2 (B-1)^2} &> \frac{\omega A}{k^2 (A-1)^2} \\
\iff \frac{(1-\mu^{*})(\omega - 1) B}{(B-1)^2} &> \frac{\omega A}{(A-1)^2}\\
\iff \frac{(1-\mu^{*})\ln(B) B}{(B-1)^2} &> \frac{\ln(A) A}{(A-1)^2}\\
\end{align*}

We know $1-\mu^* \geq 1 - \frac{1}{\omega}$ which means  $1-\mu^* \geq  \frac{\ln(B)}{\ln(A)}$. So
\[
\frac{(1-\mu^{*})\ln(B) B}{(B-1)^2}> \frac{\ln^2(B) B}{\ln(A)(B-1)^2} 
\]

It is enough to show that:
\begin{align*}
 \frac{\ln^2(B) B}{\ln(A)(B-1)^2} &>\frac{\ln(A) A}{(A-1)^2}\\
 \iff  \frac{\ln^2(B) B}{(B-1)^2} &> \frac{\ln^2(A) A}{(A-1)^2}
\end{align*}

Let:
\[
f(x)=\displaystyle\frac{\ln^2(x)x}{(x-1)^2}
\]

It is enough to show that $f(x)$ is decreasing when $x>1$ because we know that $A>B>1$. One can easily prove that $f(x)$ is indeed a decreasing function when $x>1$. Therefore, we have shown that $CS$ always increases with $k$ regardless of $\omega$ when $\mu^{*} \in \mathcal{M}_1$. The proof of non-monotonicity in $\mathcal{M}_2$ is relatively easy. When $k=0$, the monopolist chooses the PPD, which means CS is zero. As $k$ goes to infinity, the monopolist chooses uniform monopoly pricing, in which CS is again 0. However, CS is a continuous function of $k$ in the binary case. If a continuous function takes 0 values at two ends, and we know the existence of a segment in the low price region which brings a positive surplus for the high type and positive value for CS, then CS must be non-monotonic.
 
\end{proofp}

\begin{proofp}

 This time, it is easy to prove that TS is non-monotonic in $\mathcal{M}_1$. When $k=0$, the monopolist chooses the PPD, which means TS is maximized. As $k$ goes to infinity, the monopolist chooses uniform monopoly pricing, in which TS is again maximized because all types are involved in trade, making it an efficient point. However, TS is a continuous function of $k$ in the binary case. If a continuous function takes maximum values at two ends, and we know the existence of segments in the high price region which exclude some low type people from the market and cause inefficiency, then TS must be non-monotonic. The monotonicity result follows the similar logic as I used in the proof of Proposition 3.   
\end{proofp}

\begin{proofp}

Let's assume there are only two types of consumers in the market, with valuations $\omega_1$ and $\omega_2$ such that $\omega_1 < \omega_2$. Without loss of generality, let's also assume that the initial aggregate market $\mu^*$ is given and $\mu^* \in \mathcal{M}_1$, such that $p^{*}(\mu^{*})=\omega_1$ is optimal for the monopolist in the initial aggregate market $\mu^*$. 

The surplus triangle in this case would be as follows, where the red dot represents the uniform monopoly pricing case:

\begin{tikzpicture}
[scale=5]
  % draw x and y axes
  \draw[->] (-0.2,0) -- (1.4,0) node[below] {Consumer Surplus} ;
  \draw[->] (0,-0.2) -- (0,1.4) node[above] {Producer Surplus} ;
  % draw the triangle
  \draw (0,0.5) -- (0.8,0.5) -- (0,1) -- cycle;
  % fill the triangle with gray color
  \draw[line width=2.5pt] (0,0.5) -- (0.8,0.5) -- (0,1) -- cycle;
  \fill[gray] (0,0.5) -- (0.8,0.5) -- (0,1) -- cycle;
  % label the vertices
  \node[below left] at (0,0.5) {$\omega_1$};
  \node[below right] at (0.8,0.5) {$\mu^{*}(\omega_2-\omega_1)$};
  \filldraw[red] (0.8,0.5) circle (0.4pt);
  \node[above left] at (0,1) {$(1-\mu^{*})\omega_{1}+\mu^{*}\omega_2$};
\end{tikzpicture}

With two segments $\mu^1 \in [0,\mu^*]$\footnote{Here, $\mu^*$ represents the ratio of type two consumers in the market, and $\mathcal{M}_1$ corresponds to the interval $[0,\frac{\omega_1}{\omega_2}]$.} and $\mu^2 \in [\frac{\omega_1}{\omega_2},1]$ such that $\tau( \mu^1) + \tau(\mu^2) = \mu^*$, we have:

$CS=\tau(\mu^{1})(\omega_2-\omega_1)$ and $PS=\tau(\mu^{1})\omega_1+\tau(\mu^{2})\mu^{2}\omega_{2}$.\footnote{When I perform welfare analysis, I allow the monopolist to optimize her decision by taking the cost into consideration, but I analyze PS in the optimal segments with the cost part excluded.}

Since $0 \leq \tau(\mu^{1}) \leq \mu^{*}$, $CS$ can take any value in the interval $[0, \mu^{*}(\omega_2 - \omega_1)]$. For given $CS$ and $PS$ values (and given $\omega_{1}$ and $\omega_{2}$), one can solve $\mu^1$ and $\mu^2$ uniquely using the $CS$ and $PS$ equations because the only unknown parameters are $\mu^{1}$ and $\tau(\mu^{1})$.

So, a given $(CS, PS)$ pair induces optimal segments in the market.\footnote{This is not the case in the general situation where $|\Omega| \geq 3$ because, in that case, the same $(CS, PS)$ pair could be produced via different segmentations.
}

We can find the conditions that optimal segmentation should satisfy by solving the monopolist's problem over two segments. Let's assume $\mu^{*1}$ and $\mu^{*2}$ are the \textit{induced} optimal segments for the given $(CS, PS)$ pair. The monopolist's problem can be written as:

$$
\begin{aligned}
& \underset{\{\mu^1,\mu^2\}}{\text{maximize}}
& &  \tau(\mu^{1}) \omega_1 +\tau(\mu^{2})\omega_2 \mu^2- \tau((\mu^{1}) c(\mu^1) - \tau((\mu^{2}) c(\mu^2) \\
& \text{subject to}
& & \tau(\mu^1) \mu^1 + (\tau(\mu^2)) \mu^2 = \mu^* \quad  (\tau(\mu^1) \geq 0,\tau(\mu^2) \geq 0, \text{ and } \tau(\mu^1) +\tau(\mu^2)=1)\\
\end{aligned}
$$
\newpage
F.O.C. with respect to $\mu^1$:

\begin{align*}
    \frac{\partial}{\partial \mu^1}(\tau(\mu^1) \omega_1 +(1-\tau(\mu^1)) \omega_2 \mu^2- \tau(\mu^1)c(\mu^1) - (1-\tau(\mu^1)) c(\mu^2)) = 0 \\
\end{align*}
\[ \implies
\frac{\partial \tau(\mu^1) }{\partial \mu^1}(\omega_1-\omega_2 \mu^{*2})+\frac{\partial \tau(\mu^{1}) }{\partial \mu^{1}}(c(\mu^{*2})-c(\mu^{*1}))=\tau(\mu^{*1})c'(\mu^{*1}) 
    \text{...(1)}
\]

F.O.C. with respect to $\mu^2$:

\begin{align*}
    \frac{\partial}{\partial \mu^2}(\tau(\mu^{1}) \omega_1 +(1-\tau(\mu^{1})) \omega_2 \mu^{2}- \tau(\mu^{1})c(\mu^1) - (1-\tau(\mu^{1})) c(\mu^2)) = 0 \\
\end{align*}
\[ \implies
\omega_2 =c'(\mu^{*2}) 
    \text{...(2)}
\]

We also know that:
\[\tau(\mu^{1})=\frac{\mu^{2}-\mu^{*}}{\mu^{2}-\mu^{1}}\]
\[\implies \frac{\partial \tau(\mu^1) }{\partial \mu^1}=\frac{\tau(\mu^{1})}{\mu^{2}-\mu^{1}}\]
Moreover, the net gain from segmentation should also be positive. This is a necessary condition for the monopolist to segment the market. This condition can be written as:

\[
  \tau(\mu^{*1}) c(\mu^{*1}) + \tau(\mu^{*2}) c(\mu^{*2})) \leq \tau(\mu^{*2})(\omega_{2}\mu^{*2}-\omega_{1}) \text{...(3)}
\]
Now, the question is: can we find a strictly convex and continuous function that satisfies the above conditions? The answer is yes because these are only local conditions that should be satisfied by some convex function. We can construct the function $c(.)$ in the following way:

Since we are free to select $c(.)$ except for being strictly convex, we can select our $c(.)$ such that it satisfies $c(\mu^{*1})=c(\mu^{*2})$. Then, (1) gives:

\[
\omega_{1}-\omega_{2}\mu^{*}_{2}=c^{'}(\mu^{*}_{1})(\mu^{*}_{2}-\mu^{*}_{1})\\
\implies c^{'}(\mu^{*}_{1})=\displaystyle\frac{\omega_{1}-\omega_{2}\mu^{*}_{2}}{\mu^{*}_{2}-\mu^{*}_{1}}
\]

and (2) gives:

$$
 c^{'}(\mu^{*}_{2})=\omega_2
$$

Since we know the values of $\omega_1,\omega_2,\mu^{*}_{1}$, and $\mu^{*}_{2}$, we can find a convex function $c(.)$ that satisfies the equalities above, and we can also make it satisfy (3) by adjusting the constant term in $c(.)$.

Therefore, any interior point of the surplus triangle can be uniquely represented by two interior segments, $\mu^{*1}$ and $\mu^{*2}$, which satisfy the necessary and sufficient conditions for optimal segmentation.

\end{proofp}

\end{document}